\documentclass[twocolumn]{article}

\usepackage{amssymb}
\usepackage{amsmath}
\usepackage{graphicx}
\usepackage[english]{babel}

\usepackage{float}
\restylefloat{table}
\usepackage{booktabs}
\newcommand{\ra}[1]{\renewcommand{\arraystretch}{#1}}
\newcommand{\g}{\mathcal{G}}

\newtheorem{theorem}{Theorem}
\newtheorem{proposition}{Proposition}
\newtheorem{lemma}{Lemma}
\newtheorem{conjecture}{Conjecture}

\newenvironment{proof}{\paragraph{Proof:}}{\hfill$\square$}

\usepackage{authblk}

\title{Some Results on Open Edge and Open Mobile Guarding of Polygons and Triangulations}
\author[1]{Antonio Leslie Bajuelos}
\author[2]{Santiago Canales}
\author[3]{Gregorio Hern\'andez}
\author[1]{Mafalda Martins}
\author[1,4]{In\^es Matos\thanks{ipmatos@ua.pt}}
\affil[1]{Universidade de Aveiro, Portugal}
\affil[2]{Universidad Pontificia Comillas de Madrid, Spain}
\affil[3]{Universidad Polit\'ecnica de Madrid, Spain}
\affil[4]{Universitat Polit\`ecnica de Catalunya, Spain}

\date{}

\begin{document}
 
\maketitle

\begin{abstract}
This paper focuses on a variation of the Art Gallery problem that considers open edge guards and open mobile guards. A mobile guard can be placed on edges and diagonals of a polygon, and the ``open'' prefix means that the endpoints of such edge or diagonal are not taken into account for visibility purposes. This paper studies the number of guards that are sufficient and sometimes necessary to guard some classes of simple polygons for both open edge and open mobile guards. This problem is also considered for planar triangulation graphs using open edge guards.
\end{abstract}

\maketitle

\section{Introduction}

The well known Art Gallery problem studies the minimum number of guards that are needed to fully cover a polygon $P$, that is, the number of guards from which every point of $P$ is visible. Ideally, guards may be placed anywhere on $P$ but usually they are restricted to vertices of the polygon or its edges. In the first case such guards are called vertex guards and in the second edge guards. Moreover, a point guard is a guard that can be placed anywhere on the polygon. Lee et al. proved that finding the minimum number of guards to fully cover a polygon without holes is NP-hard for all three variations of guards \cite{LL86}. Toussaint conjectured that $\lfloor \frac{n}{4} \rfloor$ edge guards are sufficient to cover any simple polygon of $n$ vertices, except for small values of $n$. This exception arose from two examples of simple polygons with $n$ vertices that need $\lfloor \frac{n+1}{4} \rfloor$ edge guards to be fully covered. If guards are able to patrol along the edges and diagonals of $P$ then they are 
called \textit{mobile guards}. In this way, a mobile guard placed on edge $e$ sees a point $p$ of $P$ if some point of $e$ can see $p$. O'Rourke could not prove Toussaint's conjecture, but showed that $\lfloor \frac{n}{4} \rfloor$ mobile guards are sufficient and occasionally necessary to cover any polygon of $n$ vertices \cite{O82}. Later, Shermer proved that $\lfloor \frac{3n}{10} \rfloor$ edge guards are sufficient to cover any simple polygon, except for $n = 3, 6, 13$ where an extra edge guard might be needed \cite{S94}. Shermer actually proved a combinatorial result: any triangulation of a polygon with $n$ vertices can be dominated by $\frac{3n}{10}$ edge guards.

In this paper guards are assumed to be placed along open edges or open diagonals of a polygon, that is, the endpoints of the edge or diagonal are not taken into account for visibility purposes. Therefore, a point $p$ is covered by such guard if $p$ is visible from some interior point of the edge or diagonal. As shown in Figure \ref{pic:OpenEdgeVis}, open edge guards can see considerably less than the usual edge guards and are therefore an interesting topic of research on their own.

\begin{figure}[!htb]
\centering
\includegraphics[scale=0.8]{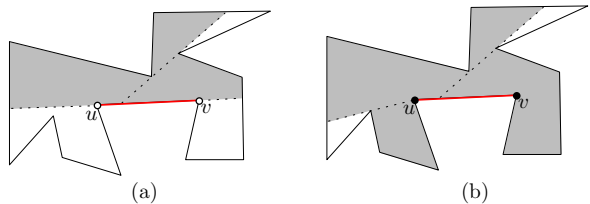}
\caption{(a) The area covered by open edge guard $\overline{uv}$ is shown in grey. (b) The area covered by closed edge guard $\overline{uv}$ is shown in grey.} \label{pic:OpenEdgeVis}
\end{figure}

Open edge guarding is a variation of the Art Gallery problem that was first introduced by Viglietta in 2011, as a way to guard 3D polyhedra, and was published a year later \cite{GioV12}. This work was then built on by Benbernou et al. \cite{BDDK11} and T\'oth et al. \cite{TTW12}, and also by Viglietta himself in his thesis \cite{GV12}. T\'oth et al. studied open edge guards and proved that $\lfloor \frac{n}{3} \rfloor$ guards are necessary to fully cover a simple polygon and $\lfloor \frac{n}{2} \rfloor$ are always sufficient \cite{TTW12}. Following this line of thought, open mobile guards are guards that patrol along open edges and open diagonals of a polygon. Recently, Mukhopadhyay et al. presented yet another variation of open edge guards called the semi-open edge guards \cite{MDT12}, which can monitor every point seen from an interior point of the edge and also from one of its endpoints. They showed that a non star-shaped polygon of $n$ vertices needs at most three semi-open guard edges to be fully 
monitored and proposed an $\mathcal{O}(n)$ algorithm to find all semi-open guard edges of a simple polygon.

This article is divided into two parts: open edge guards and open mobile guards. Section \ref{sec:OpenEdge} is devoted to open edge guards and presents results on the number of guards that cover several types of polygons, such as orthogonal and spiral polygons. Some results related to the Fortress problem on simple and orthogonal polygons are also presented. Section \ref{sec:triangulations} studies open edge guarding of planar triangulation graphs. This problem arises from patrolling triangulated terrains. Section \ref{sec:OpenMobileGuards} introduces open mobile guards and presents results on the number of open mobile guards that fully cover monotone, orthogonal and spiral polygons. Finally, Section \ref{sec:conclusions} concludes the paper and discusses some conjectures, as well as further research.

\section{Open edge guarding of polygons} \label{sec:OpenEdge}

This section studies the problem of calculating the number of open edge guards that are sufficient and sometimes necessary to cover orthogonal polygons, spiral polygons and the exterior of simple polygons. The latter is also called the Fortress problem. Consider the following definition in order to ease the reading of the paper. Given a polygon $P$, $\g_{OE}(P)$ is the minimum number of open edge guards that fully cover $P$ and $\g_{OE}(n) = \min \{ \g_{OE}(P) : \text{$P$ is a polygon of $n$ vertices} \}$. Consequently, this section is devoted to calculate $\g_{OE}(n)$ for different classes of polygons.

\subsection{Orthogonal polygons}\label{sec:Orthogonal}

Bjorling-Sachs proved that $\lfloor \frac{3n+4}{16} \rfloor$ closed edge guards are sufficient and sometimes necessary to fully cover an orthogonal polygon \cite{BS98}. This section shows that $\lfloor \frac{n}{4} \rfloor$ open edge guards are sometimes necessary and always sufficient to fully cover an orthogonal polygon.

Given an orthogonal polygon $P$ with $n$ vertices, the edges of $P$ can be divided into four categories as shown in Figure \ref{pic:OpenEdgeOrtho}(a): north (N), south (S), west (W) and east (E) edges. Each of these four sets represents a group of open edge guards that completely covers $P$. In order to see this, choose a random point $p \in P$. From this point, it is always possible to draw vertical segments that will hit a north edge if it goes up from $p$ and a south edge if it goes down. Similarly, it is always possible to draw horizontal segments through $p$ that will hit a west and an east edge. Therefore, the smallest of these four sets of edges proves the upper bound: any orthogonal polygon can be covered by $\lfloor \frac{n}{4} \rfloor$ open edge guards.

\begin{figure}[!htb]
\centering
\includegraphics[scale=0.8]{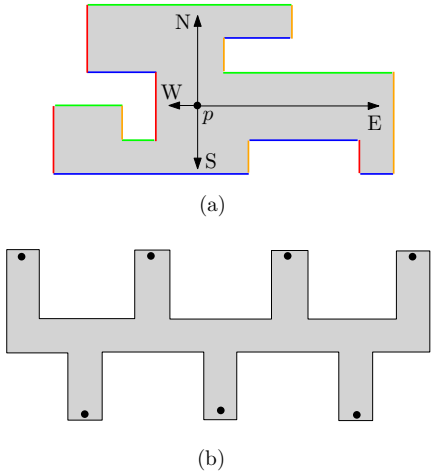}
\caption{(a) Each set of edges drawn in the same trace style fully covers the polygon. (b) Polygon that needs at least one open edge guard to cover each of the marked points.} \label{pic:OpenEdgeOrtho}
\end{figure}

Furthermore, Figure \ref{pic:OpenEdgeOrtho}(b) shows an example of an orthogonal polygon that needs $\lfloor \frac{n}{4} \rfloor$ open edge guards to be fully covered. These two bounds prove the following result.

\begin{theorem}
Any orthogonal polygon of $n$ vertices can be covered by $\lfloor \frac{n}{4} \rfloor$ open edge guards, and in some cases this number is necessary, that is, $\g_{OE}(n) = \lfloor \frac{n}{4} \rfloor$.
\end{theorem}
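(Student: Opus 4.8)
The plan is to establish the two halves of the statement separately: (a) \emph{every} orthogonal polygon with $n$ vertices is covered by $\lfloor n/4\rfloor$ open edge guards (sufficiency), and (b) \emph{some} orthogonal polygon with $n$ vertices needs $\lfloor n/4\rfloor$ of them (necessity). Part (a) is a rigorous version of the partition argument sketched above; part (b) amounts to analysing the explicit comb of Figure~\ref{pic:OpenEdgeOrtho}(b).

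For (a), let $P$ be an orthogonal polygon with $n$ vertices, hence $n$ edges, classified as north ($N$), south ($S$), west ($W$) or east ($E$) edges as in Figure~\ref{pic:OpenEdgeOrtho}(a). First I would check carefully that each of the four classes covers $P$ on its own. Take $q\in\mathrm{int}\,P$ and shoot a vertical ray upward from $q$: it leaves $P$, and the point $s$ where it first meets $\partial P$ lies on an edge $e$ that is horizontal with $\mathrm{int}\,P$ locally below it, that is, on an $N$-edge. If $s$ lies in the relative interior of $e$, then $s$ is an interior point of the open edge $e$ and it sees $q$, so $e$ covers $q$; if $s$ is a vertex of $P$, a small horizontal perturbation of $q$ in the direction along which $e$ extends (which near $s$ also points into $\mathrm{int}\,P$) reduces to the previous case while keeping $q$ visible from a relative interior point of $e$, and points of $\partial P$ are then handled by continuity. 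The statements for $S$, $W$ and $E$ follow by symmetry. Since the four classes partition the $n$ edges of $P$, the smallest of them has at most $n/4$ edges, hence at most $\lfloor n/4\rfloor$ of them because this count is an integer; taking that class as the guard set proves (a).

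For (b), I would analyse the comb $P_k$ of Figure~\ref{pic:OpenEdgeOrtho}(b): an orthogonal polygon with $k$ narrow, deep teeth and $n=4k$ vertices (when $n\not\equiv 0\pmod{4}$ one attaches up to three extra short edges along the handle, creating no new teeth or sightlines, so that $\lfloor n/4\rfloor$ teeth remain). Mark one point $q_i$ deep inside tooth $i$. The heart of the proof is the claim that \emph{no single open edge of $P_k$ sees two of the points $q_1,\dots,q_k$}. It rests on two facts. Geometrically, a tooth is so narrow and deep that every sightline reaching $q_i$ must run nearly parallel to that tooth's walls, so that once it leaves tooth $i$ it is cut off by a wall before it can reach any other tooth. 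And, crucially, the ``open'' restriction discards precisely the endpoints of the edges bounding the teeth, which destroys the grazing sightlines along a wall that would otherwise let a distant edge peer into a tooth through its mouth. Together these force every open edge that sees $q_i$ to belong to tooth $i$; since the teeth are pairwise disjoint, at least $k$ open edge guards are required, so $P_k$ needs $k=\lfloor n/4\rfloor$ of them and (b) holds. Parts (a) and (b) together give $\g_{OE}(n)=\lfloor n/4\rfloor$.

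The routine part is the partition-and-counting step of (a), the only subtlety there being the perturbation when a ray meets a vertex. The part I expect to be hardest, and would spend most of the effort on, is the visibility-isolation claim in (b): one fixes the points $q_i$ first and then makes the teeth sufficiently narrow and deep, with a handle shaped so that no edge faces two teeth, until every cross-tooth sightline is provably blocked by a wall, all the while checking that deleting edge endpoints does not accidentally leave some $q_i$ unseen by every open edge of its own tooth. It is precisely the weakness of open edge guards illustrated in Figure~\ref{pic:OpenEdgeVis} that lets this lower bound climb to $n/4$, rather than to the smaller fraction that suffices for closed edge guards.
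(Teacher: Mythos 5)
Your proposal is correct and follows essentially the same route as the paper: the upper bound via the partition of the $n$ edges into north/south/west/east classes (each of which covers $P$, so the smallest has at most $\lfloor n/4\rfloor$ edges), and the lower bound via the comb of Figure~\ref{pic:OpenEdgeOrtho}(b) with one hidden point per tooth. You are in fact more careful than the paper on the one genuine subtlety of the upper bound, namely what happens when the vertical ray from a point hits a vertex rather than the relative interior of an edge.
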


Observe that this result essentially holds for orthogonal polygons with holes, and the upper bound can be obtained in the same way it was explained above. The lower bound is based on Figure \ref{pic:OpenEdgeOrthoHoles}, which shows an orthogonal polygon with holes that needs $\lfloor \frac{n}{4} \rfloor - 1$ open edge guards to be fully covered, as each marked point is seen by a different open edge guard.

\begin{figure}[!htb]
\centering
\includegraphics[scale=0.7]{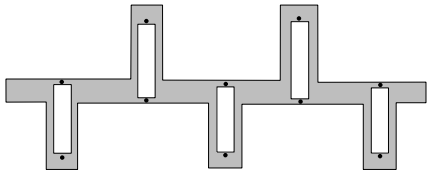}
\caption{A polygon with holes that needs $\lfloor \frac{n}{4} \rfloor - 1$ open edge guards to cover it.} \label{pic:OpenEdgeOrthoHoles}
\end{figure}

\begin{proposition}
Any orthogonal polygon of $n$ vertices with holes can be covered by $\lfloor \frac{n}{4} \rfloor$ open edge guards, and some need $\lfloor \frac{n}{4} \rfloor - 1$ open edge guards in order to be fully covered.
\end{proposition}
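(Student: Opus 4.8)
The plan is to prove the two bounds separately, reusing the argument behind Theorem~1 for the upper bound. For the upper bound, first note that an orthogonal polygon with holes still has exactly $n$ edges, since the outer boundary and every hole boundary is a closed orthogonal polygon whose number of edges equals its number of vertices, and these counts sum to $n$. As in Theorem~1, I would classify each of these $n$ edges as of type $N$, $S$, $W$ or $E$ according to which side of it the interior of $P$ locally lies on (so, for instance, the top edge of a hole is of type $S$ and its bottom edge of type $N$). I claim that each of the four classes, viewed as a set of open edge guards, covers $P$: for any point $p$ in the interior of $P$, the upward vertical ray from $p$ first meets $\partial P$ on a horizontal edge $e$ whose interior side faces downward, hence $e$ is of type $N$, and — after the same general-position perturbation used in the simple-polygon case — $p$ is visible from an interior point of $e$; the other three classes are symmetric. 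Since the four classes partition the $n$ edges, the smallest of them has at most $\lfloor n/4 \rfloor$ members, which gives the bound $\g_{OE}(P) \le \lfloor n/4 \rfloor$.

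For the lower bound I would analyze the family of polygons of Figure~\ref{pic:OpenEdgeOrthoHoles}, scaled so that the polygon has $n$ vertices and contains $\lfloor n/4 \rfloor - 1$ marked points $p_1,\dots,p_{\lfloor n/4\rfloor-1}$. The crux is a locality argument: for each $p_i$ one identifies a small ``pocket'' around it and shows that the only edges of $P$ that see $p_i$ from an interior point are the few edges bounding that pocket, and that — precisely because an open edge guard does not see its endpoints — none of those edges sees any other $p_j$. Consequently the pockets are pairwise guard-disjoint, a single open edge guard can be charged to at most one marked point, and at least $\lfloor n/4\rfloor-1$ guards are needed. Concretely I would (i) fix the coordinates of the construction and count vertices to confirm the count is $n$, (ii) for a representative $p_i$ compute its visibility region and intersect it with the edge set to pin down the candidate guard edges, and (iii) check that the ``long'' edges which would see two marked points in the closed-edge model are disqualified in the open-edge model because the relevant sightlines pass through their endpoints.

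The routine parts here are the upper bound and the vertex count. The main obstacle is the lower-bound locality argument: rigorously certifying that each marked point has a private set of candidate guard edges disjoint from those of every other marked point. This is delicate because the construction is tuned so that several edges are nearly collinear with the sightlines to the marked points, so one must track exactly which endpoints get excluded in the open model — the whole point of the example is that the open-edge restriction kills the otherwise-useful long guards. Pinning down the exact additive term ($\lfloor n/4\rfloor-1$ rather than $\lfloor n/4\rfloor$) comes down to a final global count of how many disjoint pockets the holes allow for a given vertex budget, which I would verify against the figure at the end.
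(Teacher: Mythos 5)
Your proposal matches the paper's argument: the upper bound is obtained exactly as in the simple-polygon case by partitioning all $n$ edges (outer boundary and hole boundaries alike) into N, S, W, E classes and taking the smallest, and the lower bound comes from the polygon of Figure~\ref{pic:OpenEdgeOrthoHoles}, in which each of the $\lfloor \frac{n}{4} \rfloor - 1$ marked points must be seen by a distinct open edge guard. The paper states this only as an observation without further detail, so your fleshed-out version of the same two steps is, if anything, more careful than the original.
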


\subsection{Spiral polygons}\label{sec:Spirals}

Similarly to what happens with orthogonal polygons, spiral polygons present a set of interesting properties that usually simplify or help turning some visibility problems into more tractable ones. Therefore, this section studies open edge guarding of spiral polygons, which will also be called spirals when it eases the reading of the text. According to a previous work, $\lfloor \frac{n+2}{5} \rfloor$ closed edge guards are sufficient and sometimes necessary to cover spiral polygons \cite{V94}. In the following, it is first shown that $\lfloor \frac{n+1}{4} \rfloor$ open edge guards are always sufficient and occasionally necessary to cover a spiral polygon. Secondly, an algorithm to place the minimum number of open edge guards to cover a spiral polygon is introduced.

\subsubsection{Tight bound on the number of open edge guards}

In the example in Figure \ref{pic:Spiral1}(a), each point marked on the polygon needs a different open edge guard to cover it. Since this spiral only has one possible triangulation and there is one of these points per four triangles, this polygon needs $\lceil \frac{n-2}{4} \rceil$ open edge guards in order to be fully covered. And therefore, $\g_{OE}(n) \geq \lceil \frac{n-2}{4} \rceil$. This lower bound can be rewritten as $\lfloor \frac{n+1}{4} \rfloor$, and below it is proven that this bound is tight: $\g_{OE}(n) = \lfloor \frac{n+1}{4} \rfloor$.

\begin{figure}[!htb]
\centering
\includegraphics[scale=0.7]{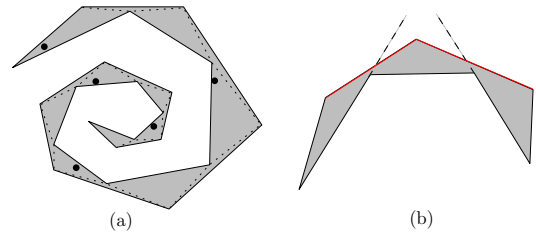}
\caption{(a) No two of the marked points can be covered by the same open edge guard. (b) The two open edge guards marked with a heavier trace cover the whole spiral.} \label{pic:Spiral1}
\end{figure}

The boundary of each spiral can be decomposed in a reflex chain and a convex chain. Let the convex chain be formed by $c$ edges. The reflex chain is formed by $r$ edges and all its vertices are reflex, except for the two endpoints. The proof that any spiral can be covered by $\lfloor \frac{n+1}{4} \rfloor$ open edge guards uses induction on the number of edges of the polygon.

The base case comprehends four cases. When the spiral has four, five or six edges it is easy to see that one open edge guard covers the whole polygon. Spirals with seven edges can be covered by two open edge guards, since it suffices to place the guards on the edges of the convex chain that are intersected by the extensions of the first and last edges of the reflex chain (see Figure \ref{pic:Spiral1}(b)).

\begin{figure}[!htb]
\centering
\includegraphics[scale=0.7]{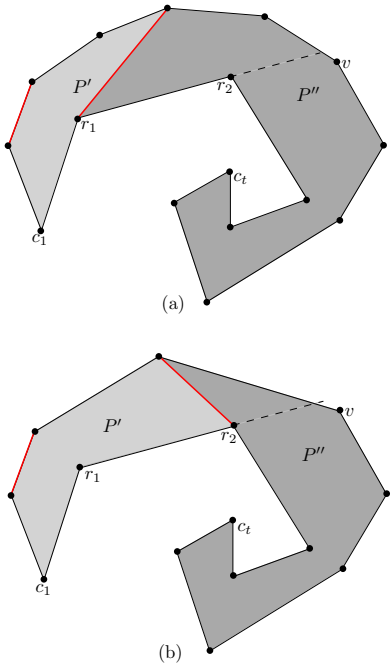}
\caption{(a) The convex chain from $c_1$ to $v$ is formed by six edges. (b) The convex chain from $c_1$ to $v$ is formed by four edges.} \label{pic:Spiral2}
\end{figure}

For the inductive step, suppose $\lfloor \frac{n+1}{4} \rfloor$ open edge guards are sufficient to cover every spiral of $n'$ vertices with $n'<n$ edges, $n > 7$. Let $P$ be a spiral with $n > 7$ vertices, whose reflex chain is formed by the vertices $\{ c_1, r_1, r_2, \ldots, r_k, c_t \}$. Now extend the edge $\overline{r_1r_2}$ until it intersects some edge of the convex chain. Let $v$ be the rightmost endpoint of the convex edge just intersected as shown in Figure \ref{pic:Spiral2}. The proof is now divided into four cases depending on the number of edges of the convex chain from $c_1$ to $v$: (a) five or more than five edges, (b) four edges, (c) three edges and (d) two edges. For case (a), suppose the convex chain from $c_1$ to $v$ has at least five edges (see Figure \ref{pic:Spiral2}(a)). Then draw the diagonal between $r_1$ and $c_5$, the fifth vertex of the convex chain. In this way, the spiral is broken into two spiral polygons: $P'$ that has six edges and so can be guarded with one open edge guard 
and $P''$ with $n-4$ edges. For case (b), suppose the convex chain from $c_1$ to $v$ has four edges as shown in Figure \ref{pic:Spiral2}(b). Then draw the diagonal between $r_2$ and $c_4$, the fourth vertex of the convex chain. This breaks the spiral in two: $P'$ has six edges and therefore can be guarded with one open edge guard and $P''$ with $n-4$ edges.

In case (c) the convex chain from $c_1$ to $v$ has three edges and the situation is slightly different. As shown in Figure \ref{pic:Spiral3}, draw the diagonal between $v$ and the first visible reflex vertex starting from $r_2$ (note that $r_2$ can be such vertex). This procedure breaks the spiral in two polygons: $P'$ that can be guarded with one open edge guard and $P''$ with at most $n-4$ edges.

\begin{figure}[!htb]
\centering
\includegraphics[scale=0.7]{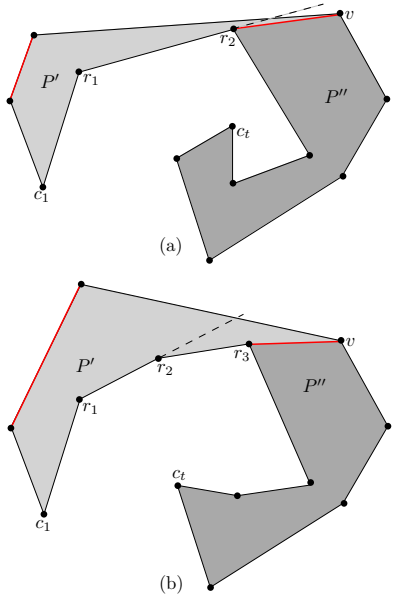}
\caption{The convex chain from $c_1$ to $v$ is formed by three edges. (a) The first reflex vertex visible from $v$ is $r_2$. (b) The first reflex vertex visible from $v$ is $r_3$.} \label{pic:Spiral3}
\end{figure}

Finally, case (d) in which the convex chain from $c_1$ to $v$ only has two edges. In this case, draw the diagonal from $v$ to the first visible reflex vertex after $r_2$. If there are no visible reflex vertices left, then the reflex chain is over and an open edge guard placed on the second edge of the convex chain covers the whole spiral (see Figure \ref{pic:Spiral4}(a)). If there is one visible reflex vertex then draw the diagonal as before, which will break the spiral in two polygons: $P'$ that can be guarded with one open edge guard and $P''$ with at most $n-4$ edges (see Figure \ref{pic:Spiral4}(b)). 

\begin{figure}[!htb]
\centering
\includegraphics[scale=0.7]{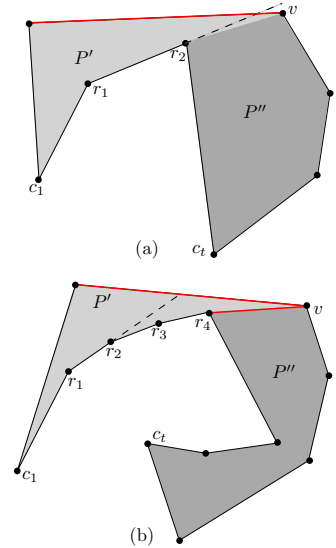}
\caption{The convex chain from $c_1$ to $v$ is formed by two edges. (a) There is no reflex vertex visible from $v$ besides $r_2$. (b) The first reflex vertex visible from $v$ is $r_4$.} \label{pic:Spiral4}
\end{figure}

All the four cases described above end with a polygon $P''$ that has at most $n-4$ edges, which means the inductive hypothesis can be applied. Therefore, $P''$ can be covered by $\lfloor \frac{(n-4)+1}{4} \rfloor = \lfloor \frac{n+1}{4} \rfloor - 1$ open edge guards. Since polygon $P'$ is covered exactly by one open edge guard, the whole spiral is covered by $\lfloor \frac{n+1}{4} \rfloor$ open edge guards and this concludes the proof.

\begin{theorem}
Any spiral polygon with $n$ vertices can be covered by $\lfloor \frac{n+1}{4} \rfloor$ open edge guards, and in some cases this number is necessary.
\end{theorem}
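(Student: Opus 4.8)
The plan is to prove the two bounds separately. For the lower bound I would start from the spiral in Figure~\ref{pic:Spiral1}(a) and argue that each of the $\lceil\frac{n-2}{4}\rceil$ marked points lies in a ``pocket'' visible only from edges belonging to a single block of four consecutive triangles of the (essentially unique) triangulation, so that no open edge guard can see two of these points; since $\lceil\frac{n-2}{4}\rceil=\lfloor\frac{n+1}{4}\rfloor$, this yields $\g_{OE}(n)\ge\lfloor\frac{n+1}{4}\rfloor$. The delicate point here is to verify that the construction truly forces distinct guards: an open edge guard placed on a diagonal shared by two adjacent pockets cannot ``peek'' into both, precisely because its endpoints — which are the reflex vertices separating the pockets — are excluded from consideration for visibility.

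For the upper bound I would use induction on the number $n$ of edges, writing the boundary of the spiral as a reflex chain of $r$ edges (all interior vertices reflex) followed by a convex chain of $c$ edges, with $r+c=n$. The base cases are $n\le 7$: a single open edge guard handles $n\in\{4,5,6\}$, and for $n=7$ two open edge guards placed on the convex edges met by the extensions of the two extreme reflex edges suffice, matching $\lfloor\frac{8}{4}\rfloor=2$. For the inductive step, assuming the bound for all spirals with fewer than $n$ edges, I would extend the edge $\overline{r_1r_2}$ until it first meets the convex chain, let $v$ be the right endpoint of the convex edge so hit, and split according to the number of edges of the convex subchain from $c_1$ to $v$: (a) at least five, cutting along $\overline{r_1c_5}$; (b) exactly four, cutting along $\overline{r_2c_4}$; (c) exactly three, cutting from $v$ to the first reflex vertex visible from $v$ starting at $r_2$; (d) exactly two, cutting from $v$ to the first reflex vertex strictly after $r_2$, or, if none exists, observing that the reflex chain has ended and a guard on the second convex edge already covers $P$. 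In every case the cut isolates a subpolygon $P'$ that is again a spiral with at most six edges, hence covered by one open edge guard, from a spiral $P''$ with at most $n-4$ edges; the inductive hypothesis gives $\lfloor\frac{(n-4)+1}{4}\rfloor=\lfloor\frac{n+1}{4}\rfloor-1$ guards for $P''$, and adding the guard for $P'$ closes the induction.

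I expect the main obstacle to be the geometric bookkeeping in the inductive step rather than the arithmetic. One must check that the case distinction on the length of the convex subchain from $c_1$ to $v$ is exhaustive and that this subchain can never have fewer than two edges; that each cut used — $\overline{r_1c_5}$, $\overline{r_2c_4}$, and the two ``first visible reflex vertex'' diagonals — is a genuine internal diagonal of $P$, which rests on the convexity of the convex chain together with the monotone turning of the reflex chain; and that both $P'$ and $P''$ are themselves spiral polygons with exactly the claimed number of edges, so that the base case applies to $P'$ and the induction hypothesis to $P''$. A secondary point worth spelling out is why a single open edge guard suffices for every spiral with at most six edges, and why in the degenerate subcase of (d) the open guard on the second convex edge sees the whole remaining region — namely because once the reflex chain has terminated the remaining piece is star-shaped with respect to an interior point of that edge.
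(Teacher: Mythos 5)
Your proposal is correct and follows essentially the same route as the paper: the identical lower-bound example with one hidden point per four triangles, and the identical induction with base cases $n\le 7$ and the four-way split on the length of the convex subchain from $c_1$ to $v$, using the cuts $\overline{r_1c_5}$, $\overline{r_2c_4}$, and the ``first visible reflex vertex'' diagonals to peel off a one-guard piece $P'$ and leave a spiral $P''$ with at most $n-4$ edges. The verification points you flag (internality of the cuts, exhaustiveness of the cases, star-shapedness in the degenerate subcase of (d)) are exactly the details the paper leaves implicit, so spelling them out would only strengthen the argument.
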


If the spiral polygon is also orthogonal, then the previous result can be improved.

\begin{proposition}
Every orthogonal spiral polygon with $n$ vertices can be covered by $\lceil \frac{n-2}{6} \rceil$ open edge guards. This bound is tight for all polygons of this class.
\end{proposition}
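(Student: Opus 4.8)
The plan is to imitate the inductive argument above for general spirals (Theorem 2), but to detach six edges per guard instead of four and to use only axis-parallel diagonals, so that both pieces of every cut remain orthogonal. First I record the combinatorial data: if the reflex chain of an orthogonal spiral $P$ has $r$ edges, then its two endpoints are convex and its other $r-1$ vertices are reflex, so the orthogonal relation (number of convex vertices minus number of reflex vertices equals $4$) forces $n=2r+2$; in particular $n$ is even and $\lceil\frac{n-2}{6}\rceil=\lceil\frac{r}{3}\rceil$. The upper bound I would prove by induction on $n$: a single open edge guard handles any orthogonal spiral with $n\le 8$ vertices (base case), and for $n>8$ one cuts $P$ into a piece $P'$ consisting of seven consecutive boundary edges of $P$ closed off by an axis-parallel diagonal (so $P'$ is an orthogonal octagon, hence covered by one open edge guard by the lemma below) together with a remaining orthogonal spiral $P''$ with $n-6$ vertices; since $1+\lceil\frac{(n-6)-2}{6}\rceil=\lceil\frac{n-2}{6}\rceil$ for even $n$, this closes the induction.

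For the base case I would prove a lemma: every orthogonal spiral with at most eight vertices --- a rectangle, an L-shape, or, up to similarity, the single octagon whose two reflex vertices are adjacent --- is covered by one open edge guard. In each case one places the guard on the convex-chain edge $e$ facing the reflex chain, chosen so that $P$ is monotone toward $e$ with no overhang; then every maximal axis-parallel column of $P$ perpendicular to $e$ is a rectangle with one side on $e$, so it is seen in full from the point of $e$ at that side, and these columns fill $P$. The same column decomposition certifies the octagonal piece $P'$ produced by the cut in the inductive step.

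For the inductive step, $n>8$, I would write the reflex chain as $c_1,r_1,\ldots,r_k,c_t$ with $k\ge 3$, extend the axis-parallel segment $\overline{r_1r_2}$ forward until it meets the convex chain, let $v$ be the far endpoint (as in the proof of Theorem 2) of the edge it hits, and split into cases on the number of convex-chain edges between $c_1$ and $v$, just as in cases (a)--(d) of that proof (Figures \ref{pic:Spiral2}--\ref{pic:Spiral4}). In each case I would pick an axis-parallel diagonal detaching exactly seven boundary edges, yielding an orthogonal spiral $P'$ with eight vertices (covered by one open edge guard, by the lemma) and a remaining orthogonal spiral $P''$ with $n-6$ vertices; in the subcases where the reflex chain runs out, $P$ is finished off directly with one more guard. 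Applying the induction hypothesis to $P''$ then gives $1+\lceil\frac{(n-6)-2}{6}\rceil=\lceil\frac{n-2}{6}\rceil$ open edge guards in all. I expect the main obstacle to be precisely this choice of cut: one must check, in every subcase, that an axis-parallel diagonal removing six edges net is available, that the detached octagon really is an orthogonal spiral (so that the lemma applies, rather than, say, the ``S''-shaped octagon, whose two reflex vertices are not adjacent and which is not coverable by a single open edge guard), and that $P''$ still decomposes into a reflex chain and a convex chain. Orthogonality is a help here, since the extension of $\overline{r_1r_2}$ meets a perpendicular convex-chain edge and this largely fixes the geometry, but the parity bookkeeping and the preservation of the spiral structure require the same kind of careful case work as in the proof of Theorem 2.

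For necessity I would exhibit, for each admissible $n$, a narrow rectangular spiral --- a ``coil'' that makes about $\frac{n-2}{6}$ half-turns and has a single combinatorially forced triangulation --- and place in a small pocket near the inner end of each half-turn a witness point; there are $\lceil\frac{n-2}{6}\rceil$ of these. Drawing the coil narrow enough confines the open-visibility region of any single open edge to essentially one half-turn, so no open edge guard can see two witnesses, and therefore $\lceil\frac{n-2}{6}\rceil$ open edge guards are necessary. This is the orthogonal analogue of the construction in Figure \ref{pic:Spiral1}(a), with six edges (equivalently six triangles) per witness instead of four --- reflecting the weaker visibility of open edges in orthogonal polygons --- and the delicate point is exactly this one-half-turn confinement, which I would establish directly from the monotone-staircase shape of the coil.
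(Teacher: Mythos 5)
Your route is very different from the paper's, which does not induct at all: it simply puts a guard on the middle edge of every group of three consecutive convex-chain edges and argues directly that these cover the polygon and are forced by witness points. Measured against the statement itself, your proposal has two genuine gaps. The first is in the upper bound: the entire weight of your induction rests on ``an axis-parallel diagonal detaching exactly seven boundary edges,'' and you explicitly defer exhibiting it. But such a diagonal generically does not exist. An axis-parallel segment between two \emph{vertices} of an orthogonal polygon requires them to share a coordinate, a coincidence you cannot assume of an arbitrary orthogonal spiral. The standard fix is to cut from a reflex vertex along the extension of one of its incident edges to a Steiner point in the interior of a convex-chain edge; both pieces stay orthogonal and, because that reflex vertex becomes a straight angle in $P''$, the counts come out to $|P'|+|P''|=n+2$, which is what you need. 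But then $P'$ is bounded by six full edges of $P$, a fragment of a seventh, and the cut, so the description of $P'$ as ``seven consecutive boundary edges closed off by a diagonal,'' the claim that $P'$ is a U-shaped octagon, and the claim that $P''$ is again a spiral all have to be re-verified. Since none of your cases (a)--(d) specifies the cut, the inductive step --- which is the whole difficulty here, not a detail --- is not actually proved.

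The second gap is in the lower bound. In a rectangular coil each quarter-turn costs two vertices (one convex, one reflex), so a half-turn costs four; if an open edge really could not see past one half-turn, a coil on $n$ vertices would need about $\frac{n}{4}$ guards, contradicting the upper bound $\lceil\frac{n-2}{6}\rceil$ you are proving at the same time. So the ``one-half-turn confinement'' cannot hold for a plain coil: an open edge on the outer wall of one corridor leg sees the whole of the two adjacent legs as well (this is exactly why one guard per three convex-chain edges suffices in the paper's argument). Witnesses therefore have to be spaced a three-quarter turn, about six vertices, apart, and the separation argument must rule out a single edge peeking around \emph{two} corners, not one. If instead you plan to insert a ``pocket'' per half-turn to hide each witness and make the vertex count work, note that a pocket on the inner wall inserts convex vertices into the reflex chain, so the polygon is no longer a spiral and the proposition no longer applies to it.
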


\begin{proof}
Any such polygon is only covered if an open edge guard is placed on the middle edge of every group of three consecutive edges of the convex chain. Therefore, if $c$ is the number of edges on the convex chain, the final number of open edge guards is $\lceil \frac{c}{3} \rceil = \lceil \frac{n-2}{6} \rceil$.
\end{proof}

\subsubsection{Placing the minimum number of open edge guards}

This section presents an algorithm to place the minimum number of open edge guards that cover a spiral polygon $P$. The main idea of the algorithm is to build two sets simultaneously: $G$, which is the set of open edge guards, and $H$ that is the set of points that guarantees $G$ is of minimum size. The points that form set $H$ are placed on the polygon in such a way that each open edge guard can only see one of them and is therefore associated with it. Consequently, $|G| = |H|$. Let $\{r_1, r_2, \ldots, r_k \}$ be the set of reflex vertices and $\{c_1, c_2, \ldots, c_{n-k} \}$ the set of convex vertices of $P$. Moreover, let $\{c_1, c_2, \ldots, c_{n-k}, r_k, r_{k-1}, \ldots, r_1 \}$ be the sequence of $n$ vertices of spiral polygon $P$. The stpdf of the algorithm to place the minimum number of open edge guards to cover $P$ are depicted in Figure \ref{pic:AlgSpiral1} and detailed in the following.

\begin{figure*}[!htb]
\centering
\includegraphics[scale=0.7]{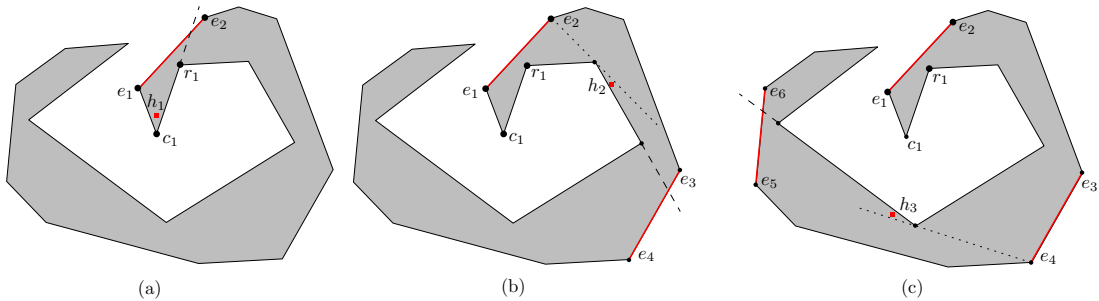}
\caption{(a) Finding an edge of the convex chain that covers point $h_1$. (b) Finding an edge of the convex chain that covers point $h_2$. (c) Polygon $P$ is fully covered by the three open edge guards.} \label{pic:AlgSpiral1}
\end{figure*}

\begin{enumerate}
 
\item $G \leftarrow \emptyset$;

\item Let $h_1$ be a point very close to $c_1$, which has to be covered. Draw the ray $\overrightarrow{c_1r_1}$ that will intersect some edge of the convex chain that sees point $h_1$. Let such edge be denoted by $(e_1,e_2)$ and assign $G \leftarrow \{ (e_1,e_2) \}$.

\item Find the last reflex vertex $r_j$ that can be seen from $e_2$ and consider point $h_2$, which is very close to $r_j$ along the edge $(r_j,r_{j+1})$. Draw the ray $\overrightarrow{r_jr_{j+1}}$ that will intersect some edge of the convex chain that sees point $h_2$. Let such edge be denoted by $(e_3,e_4)$ and assign $G \leftarrow G \cup \{ (e_3,e_4) \}$.

\item Repeat the last step until all reflex vertices and $c_{n-k}$ are guarded.

\end{enumerate}

This algorithm selects the convex edges $\overline{e_j e_{j+1}}$ that will be part of set $G$, which fully covers any spiral polygon since it totally covers its convex chain \cite{NW90}. The idea is to associate $h_1$ to the edge starting at $e_1$, which is the last vertex of the convex chain that sees $h_1$. Then this procedure is repeated for point $h_2$, which is placed close to the first reflex vertex that is not seen by $\overline{e_1 e_2}$. Afterwards $\overline{e_3 e_4}$ is selected as the last edge of the convex chain that sees $h_2$, and so on until every reflex vertex is covered. It is left to prove that the algorithm described above does indeed place the minimum number of open edge guards needed to cover a spiral polygon.

\begin{lemma} \label{lem:NumbGuards}
The algorithm described above builds a set $H$ of points interior to $P$ such that $\g_{OE}(P) \geq |H|$.
\end{lemma}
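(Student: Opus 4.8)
The plan is to establish the single fact that makes the algorithm optimal: \emph{no open edge of $P$ can see two distinct points of the set $H$ it produces}. Granting this, Lemma~\ref{lem:NumbGuards} is immediate. If $\Gamma$ is any family of open edge guards covering $P$, then each (interior) point $h\in H$ is seen by some guard $g(h)\in\Gamma$; the assignment $h\mapsto g(h)$ is injective, since otherwise one open edge would see two points of $H$; hence $|\Gamma|\ge|H|$, and minimizing over $\Gamma$ gives $\g_{OE}(P)\ge|H|$. Interiority of the $h_i$ is part of the claim but is clear from the construction, where each $h_i$ is obtained by displacing a boundary point slightly into $P$ — one only has to take the displacements small enough, as explained below.

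Write $H=\{h_1,\dots,h_m\}$ in the order the algorithm creates them, so $h_1$ sits next to $c_1$ and, for $i\ge 1$, $h_{i+1}$ sits next to $r_j$ (the last reflex vertex seen from $e_{2i}$, the second endpoint of the $i$-th chosen convex edge $\overline{e_{2i-1}e_{2i}}$), displaced slightly into $P$ towards the edge $(r_j,r_{j+1})$. Because each iteration advances along the reflex chain, the defining vertices of $h_1,\dots,h_m$ occur in this order on $\partial P$ between $c_1$ and $c_{n-k}$. For each $i$ consider the chord $\sigma_i=\overline{e_{2i}\,r_j}$; it lies in $P$ because $r_j$ is visible from the vertex $e_{2i}$, and it splits $P$ into a sub-polygon $P_i^{-}$ (containing $c_1$ and the arc of the reflex chain up to $r_j$) and a sub-polygon $P_i^{+}$ (containing $c_{n-k}$ and the remaining reflex vertices). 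By the placement of the defining vertices, and since the displacements are small, $h_1,\dots,h_i$ lie in $\mathrm{int}\,P_i^{-}$ and $h_{i+1},\dots,h_m$ lie in $\mathrm{int}\,P_i^{+}$.

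The heart of the argument is a screening claim: \emph{no point of $\mathrm{int}\,P_i^{-}$ sees $h_{i+1}$, and no point of $\mathrm{int}\,P_i^{+}$ sees $h_i$.} For the first half one uses that $r_j$ is the \emph{last} reflex vertex visible from $e_{2i}$, so the next reflex vertex $r_{j+1}$ is hidden from $e_{2i}$; together with the monotone ``winding'' structure of a spiral — the portions of the convex chain seen from successive interior points appear in sorted order — this forces the vertex $r_j$ and its incident edge $(r_j,r_{j+1})$ to screen $h_{i+1}$ from everything on the $c_1$-side of $\sigma_i$. The second half is symmetric. Granting the claim, take any open edge $g$ of $P$: since a chord of $P$ does not cross an edge of $P$, $g$ lies entirely in $\overline{P_i^{-}}$ or entirely in $\overline{P_i^{+}}$, so by the screening claim $g$ cannot see both $h_i$ and $h_{i+1}$; and as $\{h_1,\dots,h_i\}\subseteq P_i^{-}$ while $\{h_{i+1},\dots,h_m\}\subseteq P_i^{+}$, $g$ cannot see one point of $H$ of index $\le i$ and another of index $\ge i+1$. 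Letting $i$ range over all values shows $g$ sees at most one point of $H$, as required.

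I expect the screening claim, together with the degeneracies created by the ``very close to'' placements, to be the main obstacle. Because each $h_i$ sits infinitesimally off a vertex ($c_1$ or a reflex vertex) and an open edge disregards its own endpoints, one must rule out that an interior point of a candidate guard edge ``grazes'' the blocking vertex $r_j$ and so catches $h_{i+1}$, and one must keep each $h_i$ off the finitely many extension rays and chords used in the construction, so that it genuinely lies strictly on one side of every $\sigma_t$. Both are handled by shrinking each displacement until $h_i$ lies in a neighbourhood of its defining vertex disjoint from those rays and chords, after which the screening reduces to the elementary visibility structure of spirals: the reflex vertices appear in a fixed angular order around any interior point, and visibility ``does not come back'' once a reflex vertex of the chain has been passed. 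The remaining item — that the chords $\sigma_1,\sigma_2,\dots$ are nested along the spiral, used above to place $h_1,\dots,h_i$ on the $c_1$-side of $\sigma_i$ — is bookkeeping that follows from the fact that each iteration of the algorithm recurses on the sub-polygon $P_i^{+}$.
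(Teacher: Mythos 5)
Your proof takes essentially the same approach as the paper's: reduce the lemma to the visible independence of the points of $H$ (no single open edge sees two of them) and then conclude $\g_{OE}(P)\ge|H|$ by the injective assignment of each $h\in H$ to a guard that covers it. The paper's own proof simply \emph{asserts} this independence in one line, whereas you additionally supply the chord-and-screening argument (separating chords $\sigma_i$, the half-plane of the edge carrying each $h_i$, and the monotone visibility along the reflex chain of a spiral) that justifies it; this is consistent with the construction and makes your version, if anything, more complete than the paper's.
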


\begin{proof}
As the algorithm runs, several points $h_i$ are placed in a way that hides them from the edges of the convex chain of $P$ that were chosen as open edge guards. Let $H$ be the set of all these points. Note that all points of $H$ are visibly independent, that is, if $h_i \neq h_j$ then there is no open edge guard that can cover both points, and therefore $\g_{OE}(P) \geq |H|$.
\end{proof}

\begin{theorem}
The algorithm described in this section places the minimum number of open edge guards needed to cover a spiral polygon in $\mathcal{O}(n)$ time.
\end{theorem}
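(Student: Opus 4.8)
The plan is to prove the theorem in two parts: correctness (the algorithm outputs a valid guarding set of size $|H|$, which by Lemma \ref{lem:NumbGuards} is optimal) and the $\mathcal{O}(n)$ running time. For correctness, I would first argue that the set $G$ produced by the algorithm always covers $P$. Since a spiral is fully covered whenever its convex chain is fully covered \cite{NW90}, it suffices to check that the edges selected in stpdf 2--4 collectively see every point of the convex chain. The key invariant is that after processing point $h_i$ via the edge $\overline{e_{2i-1}e_{2i}}$, the portion of the convex chain from $c_1$ up to (and including) everything visible from $e_{2i}$ is covered, and the next point $h_{i+1}$ is planted just past the last reflex vertex $r_j$ visible from $e_{2i}$, so the ray $\overrightarrow{r_j r_{j+1}}$ forces the next selected edge to pick up where the previous one left off. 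I would make this precise by induction on the iteration count, showing the covered prefix of the convex chain strictly grows each round, so the loop terminates with the whole chain — hence all of $P$ — covered.

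Next I would combine this with Lemma \ref{lem:NumbGuards}: the lemma already shows $\g_{OE}(P)\ge |H|$, and the construction gives $|G|=|H|$ with $G$ a valid guarding set, so $\g_{OE}(P)\le |G| = |H| \le \g_{OE}(P)$, forcing equality and proving $G$ is a minimum open edge guard set. The slightly delicate point here is confirming $|G| = |H|$ exactly — that is, that every iteration adds exactly one new guard edge and one new hidden point, and that no guard edge is selected twice; this follows because each new $h_i$ lies strictly beyond the reach of all previously chosen edges (by the choice of $r_j$ as the \emph{last} reflex vertex seen by $e_{2i}$), so the newly selected edge must be different from all earlier ones.

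For the running time, I would describe an explicit linear-time implementation. The initial ray $\overrightarrow{c_1 r_1}$ and each subsequent ray $\overrightarrow{r_j r_{j+1}}$ can be shot against the convex chain; because the convex chain is a convex polygonal arc and the sequence of rays advances monotonically along it (successive intersection points only move forward along the chain), the total work of all ray shots is $\mathcal{O}(n)$ amortized rather than $\mathcal{O}(n)$ per shot. Similarly, "the last reflex vertex $r_j$ visible from $e_2$" can be maintained incrementally: as we walk along the reflex chain, the index $j$ is nondecreasing across iterations, so a single monotone pass over $\{r_1,\dots,r_k\}$ suffices, using the spiral structure to test visibility in $\mathcal{O}(1)$ per step. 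Summing, the algorithm does a constant number of monotone sweeps over the $n$ vertices, giving $\mathcal{O}(n)$ total.

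The main obstacle I anticipate is the monotonicity claims that make the running time linear: one must verify rigorously that both the intersection points of the successive rays with the convex chain and the "last visible reflex vertex" indices are genuinely nondecreasing, which relies on the spiral's special geometry (a single reflex chain and a single convex chain, with all turning in one direction). If that monotonicity failed, each of the up to $\lfloor \frac{n+1}{4}\rfloor$ iterations could cost $\Theta(n)$ and the bound would degrade to $\mathcal{O}(n^2)$. A secondary subtlety is handling the termination cases cleanly — e.g. when the reflex chain is exhausted and a final guard on a convex edge mops up the remainder (analogous to case (d) of the previous theorem) — so that the last point $h_i$ and last edge are correctly accounted for in the $|G|=|H|$ bookkeeping.
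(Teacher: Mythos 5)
Your proposal is correct and takes essentially the same route as the paper: optimality follows from combining Lemma \ref{lem:NumbGuards} with $|G|=|H|$ and the fact that $G$ covers the convex chain (hence all of $P$), and linearity follows from each convex-chain and reflex-chain edge being processed only once. You simply spell out the coverage invariant, the $|G|=|H|$ bookkeeping, and the monotonicity underlying the amortized $\mathcal{O}(n)$ bound, all of which the paper asserts without detail.
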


\begin{proof}
Let $G$ be the set of open edge guards chosen by the algorithm to cover spiral polygon $P$ and let $H$ be the set of hidden points. According to Lemma \ref{lem:NumbGuards}, $\g_{OE}(P) \geq |H|$ but since $|H|=|G|$ then $\g_{OE}(P) \geq |G|$ and therefore $G$ is a minimum set of open edge guards. Regarding the time complexity, each edge of the convex chain is only processed once whilst analysing the rays $\overrightarrow{r_jr_{j+1}}$. In the same way, each edge of the reflex chain is checked once to find the last reflex vertex that is visible from the chosen edges. Consequently, each vertex of the spiral polygon is analysed just once by the algorithm and therefore it runs in linear time.
\end{proof}

\subsection{Fortress problem}

This section is devoted to another variation of the Art Gallery problem called the \textit{Fortress Problem}. Instead of guarding the interior of a simple polygon, the Fortress Problem variation focuses on monitoring the exterior of a polygon. This problem has been studied for both vertex and edge guards, but the results below are naturally be associated with open edge guards. Choi et al. proved that the exterior of any simple polygon can be covered by $\lceil \frac{n}{3} \rceil$ edge guards and that these guards are necessary to cover the exterior of convex polygons \cite{CY01}. In the case of open edge guards, this problem is trivial since it is easy to see that every edge will be needed as a guard to cover the exterior of a convex polygon. Consequently, the exterior of any simple polygon with $n$ vertices can be covered by $n$ open edge guards, and in some cases this number is necessary.

The natural following step is to study orthogonal polygons. Again, Choi et al. proved that the exterior of any orthogonal polygon can be covered by $\lfloor \frac{n}{4} \rfloor +1$ edge guards and that this number can be necessary \cite{CY01}.

The proof of the following theorem is based on the technique of dividing the edges according to their orientation, as introduced in Section \ref{sec:Orthogonal}. As Figure \ref{pic:Fortress1}(a) shows, the edges of an orthogonal polygon can be divided into four groups: north edges (N), south edges (S), west edges (W) and east edges (E). Let $R$ be the smallest rectangle that encloses polygon $P$. Every point outside $P$ and within $R$ can be covered by open edges of type N or S. This can be easily seen as a vertical line through point $p \in R \backslash P$ will always intersect a north edge or a south edge (or both). The same happens for edges of type E and W and it is fairly easy to realise that there are as many edges N and S as E and W. If there is an open edge guard on every edge of type N or S, then the whole region $R \backslash P$ is covered by $\frac{n-4}{2}$ open edge guards. The exterior of $R$ can be guarded by four open edge guards, each placed on the extreme edges that dominate $R$ (topmost, 
bottommost, leftmost and rightmost edges). Therefore, the exterior of $P$ is covered by $\frac{n-4}{2} + 4 = \frac{n}{2} + 2$ open edge guards. 

Finally, the lower bound is given by the orthoconvex polygon depicted in Figure \ref{pic:Fortress1}(b). Each of the marked points is covered by a different open edge guard, and so $\frac{n}{2} + 2$ is a lower bound for this problem.

\begin{figure}[!htb]
\centering
\includegraphics[scale=0.8]{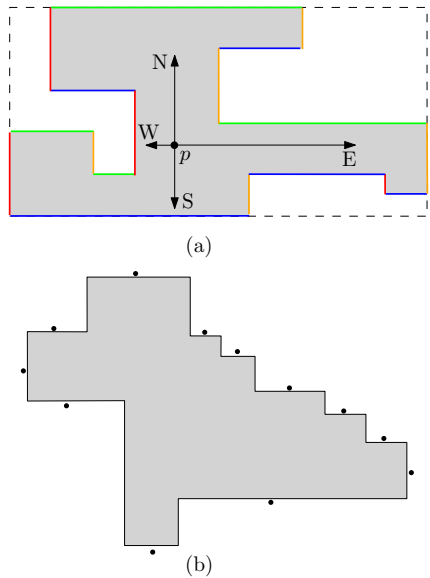}
\caption{(a) The edges of the polygon are divided into N, S, W and E edges. (b) An orthoconvex polygon that needs $\frac{n}{2} + 2$ open edge guards to be fully covered.} \label{pic:Fortress1}
\end{figure}

\begin{theorem}
The exterior of any orthogonal polygon with $n$ vertices can be covered by $\frac{n}{2}+2$ open edge guards, and in some cases this number is necessary.
\end{theorem}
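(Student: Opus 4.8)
The plan is to establish the two bounds separately, formalising the construction sketched before the statement. For the upper bound, I would first recall the partition of the $n/2$ horizontal edges of $P$ into the set $N$ of north edges (interior of $P$ locally below) and the set $S$ of south edges (interior locally above), so that $|N|+|S|=n/2$. The first claim to prove is that the open edges of $N\cup S$ together cover $R\setminus P$, where $R$ is the smallest axis-parallel box enclosing $P$. Given $p\in R\setminus P$, the vertical line through $p$ meets $P$, because $R$ is the \emph{smallest} enclosing box so the $x$-coordinate of $p$ lies in the $x$-range of $P$; hence the upward ray or the downward ray from $p$ enters $P$ through the interior of some horizontal edge $e$ before meeting $\partial P$ anywhere else. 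That first crossing point is an interior point of $e$, the sub-segment of the ray from $p$ to it is disjoint from the interior of $P$, and $e$ lies in $S$ if the upward ray enters and in $N$ if the downward ray enters; so the open edge $e$ sees $p$. (The measure-zero alignments in which $p$ lies directly below or above a vertex of $P$ are handled by a standard perturbation argument, or by taking the horizontal edge incident to that vertex on the appropriate side.)

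The second claim is that the four extreme edges of $P$ — a topmost edge (which is an $N$ edge), a bottommost edge (an $S$ edge), a leftmost edge (a west edge) and a rightmost edge (an east edge) — cover, as open edges, the complement $\mathbb{R}^2\setminus R$. The key observation is that all of $P$ lies in the closed half-plane below the line $\ell$ supporting a topmost edge $e$, and $P$ meets $\ell$ only along topmost edges; consequently, for any point $q$ strictly above $\ell$, the segment from $q$ to an interior point $m$ of $e$ stays in $\{y\ge y(e)\}$, reaches $\{y=y(e)\}$ only at $m$, and hence meets $\partial P$ only at $m$ — so $e$ sees the whole open half-plane strictly above $\ell$. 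The symmetric statements for the other three extreme edges give four open half-planes whose union is exactly $\mathbb{R}^2\setminus R$. Combining the two claims: the open edges of $N\cup S$ together with a leftmost and a rightmost edge — a total of $|N\cup S|+2=n/2+2$ open edge guards, since the topmost and bottommost extreme edges already lie in $N\cup S$ — cover $(R\setminus P)\cup(\mathbb{R}^2\setminus R)=\mathbb{R}^2\setminus P$, which is the upper bound.

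For the lower bound I would argue from the orthoconvex polygon in Figure~\ref{pic:Fortress1}(b): it carries $n/2+2$ marked exterior points, and it suffices to check that they are pairwise visibly independent with respect to open edges, i.e.\ no open edge of $P$ sees two of them. Each marked point sits in a pocket of the exterior from which only the one or two open edges bounding that pocket are visible, the two remaining points lie far out in two opposite corner regions of the bounding box and are each visible only from a single pair of edges, and these visibility sets are pairwise disjoint; hence any covering set of open edge guards must contain a distinct guard for each marked point, so $\g_{OE}(P)\ge n/2+2$. Together with the upper bound this yields $\g_{OE}(n)=\frac{n}{2}+2$ for the exterior of orthogonal polygons.

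I expect the main obstacle to be making the lower-bound argument fully rigorous: one must verify, pocket by pocket in the figure, exactly which open edges see each marked point and confirm that these sets are disjoint — in particular that the endpoints excluded by the ``open'' convention genuinely prevent the long extreme edges from reaching into neighbouring pockets. On the upper-bound side the only delicate point is the careful treatment of the degenerate vertical alignments in the first claim; the half-plane observation underlying the second claim is clean once stated.
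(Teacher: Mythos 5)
Your proposal is correct and follows essentially the same route as the paper: the horizontal (N and S) open edges cover $R\setminus P$ via a vertical-ray argument, the four extreme edges cover $\mathbb{R}^2\setminus R$ by the supporting-line observation, and the lower bound comes from the orthoconvex example of Figure~\ref{pic:Fortress1}(b); your bookkeeping ($|N\cup S|+2$ versus the paper's $\frac{n-4}{2}+4$) selects the same set of guards.
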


This result can also be rewritten using reflex vertices. Since $r +2 = \frac{n}{2}$, $r+4$ open edge guards are always sufficient and occasionally necessary to fully cover the exterior of an orthogonal polygon.

\section{Open edge guarding of triangulations} \label{sec:triangulations}

In the '80s some of the research in visibility problems shifted to polyhedral terrains. A terrain is a polyhedral surface whose intersection with a vertical line is either empty or a point. Terrains are often considered to be triangulated in such way that all its faces are triangles. If one of these terrains is orthogonally projected onto a plane below, it becomes a planar triangulation graph, that is, the graph of a triangulation of a set of points on the plane. This allows a combinatorial correspondence between guarding a triangulated terrain and guarding its projection. A set of edges $H$ is called a set of guards for a triangulation if each face of such triangulation is covered by at least one guard of $H$.

Bose et al. showed that the problem of finding the minimum number of edge guards that fully cover a terrain is NP-complete \cite{BSTB97}. Even though it was published later, Everett and Rivera-Campo proved in 1994 that $\lfloor \frac{n}{3} \rfloor$ edge guards suffice to cover every face of a planar triangulation with $n$ vertices \cite{ERC97}. The lower bound was established by Batista et al. using a triangulation that needs $\lfloor \frac{n-2}{3} \rfloor$ edge guards in order to be fully covered \cite{BPR10}.

Naturally, this section studies the variation of this problem that concerns open edge guards.

\subsection{Planar triangulations}

Since this section deals with open edges, a guard placed on an interior edge of a triangulation is only able to patrol the two triangles incident to such edge (see Figure \ref{pic:OpenEdgeTriang1}). The problem of finding a set of open edge guards that covers a triangulation $G$ can be translated to finding an edge cover of the dual graph $G^*$. If $\g_{OE}(G)$ is the size of a minimum set of open edge guards that covers $G$, then $\g_{OE}(G) = \beta'(G^*)$, where $\beta'(G^*)$ is the size of a minimum edge cover of $G^*$, that is, the edge covering number.

\begin{figure}[!htb]
\centering
\includegraphics[scale=0.7]{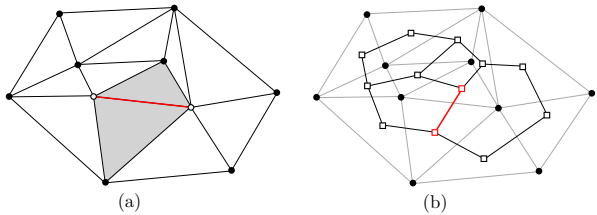}
\caption{(a) An open edge guard placed on the bold edge covers the two grey triangles. (b) The bold edge becomes an edge of the dual graph uniting the nodes representing the grey triangles.} \label{pic:OpenEdgeTriang1}
\end{figure}

In the following it is shown how to calculate an upper bound for $\beta'(G^*)$. As Figure \ref{pic:OpenEdgeTriang1} shows, each interior vertex of $G$ becomes a cycle without chords in $G^*$. If an edge $e \in G^*$ is deleted from one of such cycles then $\beta'(G^*) \leq \beta'(G^* \backslash e)$, that is, there is one less edge but the number of nodes to cover remains the same. Removing these edges breaks the cycles of $G^*$, and once there are no cycles left the graph becomes a spanning tree $T^*$ of maximum degree three. And consequently, $\beta'(G^*) \leq \beta'(T^*)$. Consider the following theorem.

\begin{theorem}
Every tree $T^*$ with $t$ nodes and maximum degree three allows an edge cover that has $\lfloor \frac{2t+1}{3} \rfloor$ edges at most, that is, $\beta'(T^*) \leq \lfloor \frac{2t+1}{3} \rfloor$.
\end{theorem}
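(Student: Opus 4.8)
The plan is to prove the bound $\beta'(T^*) \leq \lfloor \frac{2t+1}{3} \rfloor$ by induction on the number of nodes $t$, peeling off a small subtree at each step so that the remaining tree shrinks by three nodes. First I would handle the base cases $t = 1, 2, 3$ directly: a single node cannot be covered by any edge, so strictly we should assume $t \geq 2$; for $t = 2$ one edge suffices and $\lfloor 5/3 \rfloor = 1$; for $t = 3$ the tree is a path (degree at most three is automatic) and two edges are needed, with $\lfloor 7/3 \rfloor = 2$. These will anchor the induction.

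For the inductive step, with $t > 3$, the key idea is to root $T^*$ and look at a deepest leaf $v$. Let $u$ be its parent. The hard part will be showing that one can always find a configuration of at most two or three nodes near the bottom of the tree whose removal leaves a tree $T'$ with $t' = t - 3$ nodes (still of maximum degree three), covered using at most $\lfloor \frac{2t-5}{3} \rfloor = \lfloor \frac{2t+1}{3} \rfloor - 2$ edges, so that adding two new edges to cover the removed part closes the gap. I would argue by cases on the structure around $u$: since $u$ has degree at most three, it has at most two children (besides its edge up to its own parent $w$). If $u$ has two leaf children, then the subtree on $\{w, u, \text{two leaves}\}$ contributes, and removing $u$ and its two leaves (three nodes) while keeping $w$, one covers those three removed nodes with the two edges from $u$ to its leaves plus already needing one edge at $w$; more carefully, remove the three nodes $u$ and its two leaf children, recurse on $T'$ (which still contains $w$), obtain a cover of size $\leq \lfloor \frac{2(t-3)+1}{3}\rfloor$, and then add the edge $wu$ together with one edge from $u$ to a leaf — but that leaves the second leaf uncovered, so instead add the two edges $uv_1$ and $uv_2$ and the edge $wu$ is unnecessary since $u$ is already covered; that is only two extra edges for three extra nodes, as required. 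If $u$ has one leaf child $v$ and one non-leaf child, or only the one child $v$, then I would instead remove $v$, $u$, and $w$ (or an appropriate triple along the path), covering them with the single edge $vu$ and reconnecting carefully; the deepest-leaf choice guarantees the other child of $u$, if present, is also a leaf, which reduces the number of subcases.

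The main obstacle I anticipate is the bookkeeping at the ``seams'': when three nodes are removed, the node that reconnects $T'$ to the removed piece may or may not already be covered by the recursively obtained cover, and the arithmetic $\lfloor \frac{2t+1}{3} \rfloor = \lfloor \frac{2(t-3)+1}{3} \rfloor + 2$ only works if we genuinely spend exactly two edges on the three removed nodes in every case. I would therefore organize the case analysis so that the removed triple always admits a cover by exactly two edges that is \emph{internal} to the triple (hence disjoint from whatever the recursion does), for instance a path on three nodes covered by two edges, or a star $K_{1,2}$ covered by two edges, and verify that some such triple can always be detached at the bottom of a tree of maximum degree three with more than three nodes. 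A cleaner alternative I would keep in reserve is to invoke the classical formula $\beta'(G) = n - \mu(G)$ (Gallai's identity, for graphs with no isolated vertices), reducing the claim to a lower bound $\mu(T^*) \geq \lceil \frac{t-1}{3} \rceil$ on the size of a maximum matching in a tree of maximum degree three; a greedy leaf-matching argument gives a matching of size at least $\lceil (t-1)/3 \rceil$ by repeatedly matching a deepest leaf to its parent and deleting both endpoints together with at most one further pendant neighbour, and substituting back yields $\beta'(T^*) \leq t - \lceil \frac{t-1}{3} \rceil = \lfloor \frac{2t+1}{3} \rfloor$.
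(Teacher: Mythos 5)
Your route is genuinely different from the paper's. The paper does not induct on the tree at all: it applies Gallai's identity $\alpha'(T^*)+\beta'(T^*)=t$ to reduce the claim to the matching bound $\alpha'(T^*)\ge\lceil\frac{t-1}{3}\rceil$, and then gets that bound in one line from K\"onig's theorem --- a tree is bipartite, so $\alpha'(T^*)=\beta(T^*)\ge\frac{q}{\Delta}=\frac{t-1}{3}$, since a vertex of degree at most $3$ covers at most $3$ of the $q=t-1$ edges. That argument is shorter and dodges all of the ``seam'' bookkeeping you anticipate; your direct induction is more elementary (no K\"onig) and, if completed, actually constructs the cover, which the paper's existence argument does not.

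Two concrete points in your write-up would not survive as stated. First, in the single-leaf-child case, removing the triple $v,u,w$ and covering it ``with the single edge $vu$'' leaves $w$ uncovered, and deleting $w$ can disconnect $T'$ when $w$ has another child; the clean fix is to abandon the insistence on triples there and remove only the pair $\{v,u\}$, covered by the one edge $uv$, since $\lfloor\frac{2(t-2)+1}{3}\rfloor+1=\lfloor\frac{2t}{3}\rfloor\le\lfloor\frac{2t+1}{3}\rfloor$. (Relatedly, your base cases must absorb the situation where the residual tree is a single isolated node, e.g.\ $T^*=K_{1,3}$ with $t=4$, which needs all $3=\lfloor 9/3\rfloor$ edges and cannot be handled by recursing on one vertex.) Second, your reserve greedy-matching argument undercounts: when you match a deepest leaf $v$ to its parent $u$ and delete $u$ together with its pendant children, the grandparent $w$ may become isolated, so a step can consume four vertices (a residual $K_{1,3}$ component) while yielding only one matching edge, and the claimed ratio of one edge per three vertices is not automatic. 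The bound $\alpha'(T^*)\ge\lceil\frac{t-1}{3}\rceil$ is true, but the paper's K\"onig argument is the painless way to certify it; if you keep the greedy version you need an extra charging argument for isolated leftovers.
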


\begin{proof}
This proof takes advantage of the relation between edge covering and matching. The latter is a subset of edges of a graph without common vertices. Let $\alpha'(T^*)$ be the maximum number of edges of a matching in $T^*$. According to Gallai's theorem, if a graph $G$ has $m$ nodes and none of them is isolated, then $\alpha'(G) + \beta'(G) = m$. Therefore, it suffices to prove that there is a matching with $t - \lfloor \frac{2t+1}{3} \rfloor$ edges in every tree $T^*$ with $t$ nodes and maximum degree three. Since $t - \lfloor \frac{2t+1}{3} \rfloor = \lceil \frac{t-1}{3} \rceil$, it is enough to show that there is a matching in $T^*$ with $\lceil \frac{t-1}{3} \rceil$ edges. A vertex cover of a graph is a set of vertices such that each edge of the graph is incident to at least one vertex of the set. The minimum number of vertices in a vertex cover of a graph $G$ is denoted by $\beta(G)$. According to K\"{o}nig's theorem, this number coincides with $\alpha'(G)$ when $G$ is a bipartite graph. The size of a 
matching in a bipartite graph is given by the following lemma.

\begin{lemma}
There is a matching with at least $\frac{q}{\Delta}$ edges in every bipartite graph of $q$ edges and maximum degree $\Delta$.
\end{lemma}

\begin{proof}
To prove this lemma, observe that each vertex of graph $G$ only covers $\Delta$ edges at most. Therefore, at least $\frac{q}{\Delta}$ vertices are needed to cover all edges of $G$ and so $\beta(G) \geq \frac{q}{\Delta}$. Since $G$ is a bipartite graph, $\alpha'(G) = \beta(G) \geq \frac{q}{\Delta}$ and consequently there is a matching in $G$ with at least $\frac{q}{\Delta}$ edges.
\end{proof}

To conclude the proof of the theorem, note that tree $T^*$ is a bipartite graph with $t-1$ edges and maximum degree three. Therefore, and according to the previous lemma, $\alpha'(T^*) \geq \lceil \frac{t-1}{3} \rceil$ and so $\g_{OE}(G) = \beta'(G^*) \leq \beta'(T^*) \leq \lfloor \frac{2t+1}{3} \rfloor$.
\end{proof}

If $\g_{OE}(t)$ is defined as $\g_{OE}(t) = \max \{ \g_{OE}(G) : \text{$G$ is a triangulation with }$ $\text{$t$ triangles} \}$ then $\g_{OE}(t)$ is bounded from above by $\lfloor \frac{2t+1}{3} \rfloor$, that is, $\g_{OE}(t) \leq \lfloor \frac{2t+1}{3} \rfloor$.

Finally, it is left to prove that there are triangulations that need $\lfloor \frac{2t+1}{3} \rfloor$ open edge guards in order to be fully covered. Such triangulation springs from a tree $T^*$ with $t$ nodes and maximum degree three. This tree is built on a path of even length by adding a leaf connected to every other node (see Figure \ref{pic:OpenEdgeTriang2}(a)).

\begin{figure}[!htb]
\centering
\includegraphics[scale=0.7]{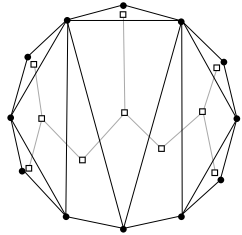}
\caption{(a) Tree $T^*$ in which there is a leaf connected to every other node. (b) A triangulation that needs $\lfloor \frac{2t+1}{3} \rfloor$ open edge guards in order to be fully covered.} \label{pic:OpenEdgeTriang2}
\end{figure}

Let $H$ be the set of nodes of $T^*$ of degree one and two. Each of these nodes is covered by a different edge of $T^*$, and so if $C$ is an edge cover of $T^*$ then $|C| \geq |H|$. To count the number of elements of $H$, observe that if a tree has $t_1$ nodes of degree one, $t_2$ nodes of degree two and $t_3$ nodes of degree three, then $t_1+2t_2+3t_3 = 2(t_1+t_2+t_3-1)$ and therefore $t_1 = 2+ t_3$. In the example in Figure \ref{pic:OpenEdgeTriang2}(a) one can observe that $t_2 = t_3 -1$, and so $t = t_1+t_2+t_3 = 3t_3+1$. The number of elements of $H$ is then given by: 

$$ |H| = t_1 + t_2 = 2t_3+1 = 2 \frac{t-1}{3} + 1 = \frac{2t+1}{3} $$

According to this, any edge cover of $T^*$ has to be formed by at least $\frac{2t+1}{3}$ edges. In the example, the dual-graph $T^*$ has $t$ nodes where $t \equiv 1 \pmod{3}$. Note that if a node is added to $T^*$ (adjacent to one of its leaves) or two nodes are added (adjacent to different leaves), the minimum number of edges of an edge cover of $T^*$ does not change. This observation concludes the proof of the lower bound, as for any value of $t>1$ there are examples of trees that are dominated by sets formed by at least $\frac{2t+1}{3}$ edges. Figure \ref{pic:OpenEdgeTriang2}(b) shows a triangulation whose dual-graph is tree $T^*$ described above.

The two bounds achieved above prove the following theorem.

\begin{theorem}
Any triangulation with $t$ triangles can be covered by $\lfloor \frac{2t+1}{3} \rfloor$ open edge guards and in some cases this number is necessary, that is, $\g_{OE}(t) = \lfloor \frac{2t+1}{3} \rfloor$.
\end{theorem}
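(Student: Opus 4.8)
The plan is to assemble the equality $\g_{OE}(t) = \lfloor\frac{2t+1}{3}\rfloor$ from the two matching bounds. For the upper bound I would take an arbitrary triangulation $G$ with $t$ triangles and pass to its dual graph $G^*$, which has exactly $t$ nodes; since an open guard on an interior edge of $G$ patrols precisely the two triangles incident to that edge, it covers exactly the two endpoints of the corresponding edge of $G^*$, so $\g_{OE}(G) = \beta'(G^*)$. I would then destroy the cycles of $G^*$ one at a time: every interior vertex of $G$ produces a chordless cycle in $G^*$, and deleting a single edge from such a cycle leaves the node set unchanged, hence $\beta'(G^*) \le \beta'(G^* \backslash e)$ at each step. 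Once no cycles remain one is left with a connected spanning subgraph which is a tree $T^*$ of maximum degree three, and the tree theorem proved above gives $\beta'(T^*) \le \lfloor\frac{2t+1}{3}\rfloor$. Chaining the inequalities, $\g_{OE}(G) \le \lfloor\frac{2t+1}{3}\rfloor$ for every such $G$, so $\g_{OE}(t) \le \lfloor\frac{2t+1}{3}\rfloor$.

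For the matching lower bound I would exhibit triangulations attaining this value. I would start from the caterpillar $T^*$ obtained by taking a path of even length and attaching a new leaf to every other vertex of the path, as in Figure~\ref{pic:OpenEdgeTriang2}(a); a double count on the degree sequence ($t_1 + 2t_2 + 3t_3 = 2(t-1)$ together with $t_2 = t_3 - 1$) yields $t_1 = t_3 + 2$ and $t = 3t_3 + 1$. Taking $H$ to be the set of nodes of degree one or two, each node of $H$ must be met by its own edge in any edge cover and, by inspection of the caterpillar, these edges may be chosen pairwise distinct, so $\beta'(T^*) \ge |H| = t_1 + t_2 = 2t_3 + 1 = \frac{2t+1}{3}$. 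I would then realize $T^*$ as the dual graph of an explicit polygon triangulation, as in Figure~\ref{pic:OpenEdgeTriang2}(b), so that $\g_{OE}(G) = \beta'(G^*) = \beta'(T^*) \ge \lfloor\frac{2t+1}{3}\rfloor$ in that case. Since the base construction only gives $t \equiv 1 \pmod 3$, I would clear the remaining residues by small modifications — appending one extra pendant node for $t \equiv 2$, and using a path of odd length with alternating leaves for $t \equiv 0$ — checking in each case that the resulting edge covering number equals $\lfloor\frac{2t+1}{3}\rfloor$. Combined with the upper bound, this establishes $\g_{OE}(t) = \lfloor\frac{2t+1}{3}\rfloor$ for every $t > 1$.

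I expect the main obstacles to be the two structural claims rather than the arithmetic. On the upper-bound side one must verify that the edges deleted while breaking cycles can be selected so that the outcome is genuinely a spanning tree — connected and acyclic and still of maximum degree three — rather than merely a spanning forest; this calls for a short argument on how the chordless cycles of $G^*$ interact along shared edges, using that $G$ triangulates a simply connected region. On the lower-bound side the delicate points are (i) that the abstract caterpillar $T^*$ and its small augmentations really do occur as duals of planar triangulations, and (ii) that the degree-$\le 2$ nodes forced onto distinct cover edges yield the exact count $\lfloor\frac{2t+1}{3}\rfloor$ in every residue class modulo $3$ — the case $t \equiv 0 \pmod 3$ in particular needs a slightly different base tree so the floor is attained and not undershot by one. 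Once these are in place, the identities $t - \lfloor\frac{2t+1}{3}\rfloor = \lceil\frac{t-1}{3}\rceil$ and $|H| = \frac{2t+1}{3}$ close the argument.
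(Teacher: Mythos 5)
Your proposal follows essentially the same route as the paper: reduce $\g_{OE}(G)$ to the edge covering number $\beta'(G^*)$ of the dual, break the chordless cycles arising from interior vertices to reach a spanning tree $T^*$ of maximum degree three, apply the tree bound $\beta'(T^*)\le\lfloor\frac{2t+1}{3}\rfloor$, and match it with the caterpillar dual whose degree-one and degree-two nodes force $\frac{2t+1}{3}$ cover edges. The points you flag as delicate (connectivity after cycle-breaking, realizability of the caterpillar as a dual, and the residues of $t$ modulo $3$) are exactly the ones the paper itself treats only briefly, so your argument is correct and essentially identical to the published one.
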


\section{Open mobile guarding of polygons} \label{sec:OpenMobileGuards}

Since open edge guarding is difficult to tackle, this section allows open edge guards to monitor diagonals of the polygons as well, and therefore it studies open mobile guards. Given a polygon $P$, $\g_{OM}(P)$ is the minimum number of open mobile guards that fully cover $P$, that is, $\g_{OM}(n) = \min \{ \g_{OM}(P) : \text{$P$ is a polygon of $n$ vertices} \}$.

\subsection{Monotone polygons}\label{sec:Monotone}

Since open mobile guards can also be placed on both edges and diagonals of a polygon, this section uses the term open edge guard to make it clearer when it suffices to place a guard on an edge and no diagonals are needed.

\begin{theorem}
Any monotone polygon with $n$ vertices can be covered by $\lfloor \frac{n}{3} \rfloor$ open mobile guards. This bound is tight for all polygons of this class.
\end{theorem}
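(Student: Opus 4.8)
The plan is to prove the two bounds separately, and the lower bound is the easy part. I would exhibit a monotone \emph{comb}: a long thin rectangle (the \emph{spine}) carrying $k=\lfloor n/3\rfloor$ narrow triangular teeth along one long side, so that the polygon has $3k$ vertices (an apex and two reflex ``notch'' vertices per tooth) and is monotone in the direction along which the teeth are laid out. Marking one point $p_i$ deep inside the $i$-th tooth, no open edge and no open diagonal can see two of these points at once: a straight segment meeting the interiors of two different teeth would have to leave the polygon through the notch between them, and deleting the endpoints of the segment (the ``open'' restriction) can only lose visibility. Hence at least $k=\lfloor n/3\rfloor$ open mobile guards are necessary, which yields tightness once the upper bound is in hand.

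For the upper bound, let $P$ be monotone with respect to the $x$-direction. First I would triangulate $P$ so that the dual tree $T^{*}$ is a \emph{path} $t_{1},t_{2},\dots,t_{n-2}$; this is possible exactly because $P$ is monotone (sweeping left to right, every triangle can be made to share an edge with the previous one). Then split the triangles into consecutive blocks of three, $\{t_{1},t_{2},t_{3}\},\{t_{4},t_{5},t_{6}\},\dots$, with a final block of one or two triangles when $n-2\not\equiv 0\pmod 3$. The number of blocks is $\lceil(n-2)/3\rceil$, and a short check gives $\lceil(n-2)/3\rceil=\lfloor n/3\rfloor$ for every $n\ge 3$, so it suffices to cover each block with one open mobile guard. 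Blocks of size one or two are immediate: an open edge guard on any edge covers a single triangle, and an open guard on the shared diagonal covers two triangles.

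The geometric heart of the matter is a block of three consecutive triangles $t_{j-1},t_{j},t_{j+1}$ along the dual path. The diagonals $d_{1}=t_{j-1}\cap t_{j}$ and $d_{2}=t_{j}\cap t_{j+1}$ are two distinct edges of the middle triangle $t_{j}$, hence they share exactly one vertex $q$, and $q$ is a vertex of all three triangles. Therefore $\Pi=t_{j-1}\cup t_{j}\cup t_{j+1}$ is a polygon with at most five vertices that is star-shaped with $q$ in its kernel. I would cover $\Pi$ by one open mobile guard placed on an open edge or open diagonal of $P$ passing through the interior of the kernel of $\Pi$ — concretely one of $d_{1},d_{2}$, or a polygon edge incident to $q$, whichever of these actually sees all of $\Pi$; a point of that segment sufficiently close to $q$ weakly sees the whole block. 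Summing over all blocks produces at most $\lfloor n/3\rfloor$ open mobile guards.

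I expect the real obstacle to be precisely this last step: covering a three-triangle block by a \emph{single open} guard. Since the endpoints of the segment do not count, one cannot simply guard ``at'' $q$, and if $q$ is a reflex vertex of $P$ the kernel of $\Pi$ can collapse to the single point $q$, in which case the block as drawn cannot be guarded by one open segment at all. The proof must be arranged so this never happens — either by choosing the path-triangulation and the cut points of the blocks so that every block's common vertex is convex (a monotone polygon has enough convex vertices, and there is freedom in where to cut the path), or by occasionally using a block of only two triangles and checking that the total still does not exceed $\lfloor n/3\rfloor$. Carrying out this case analysis cleanly while keeping the guard count at $\lfloor n/3\rfloor$ is where the work lies.
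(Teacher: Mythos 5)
Your lower bound is essentially the paper's (a comb of narrow teeth with one hidden point per three vertices), and that part is fine. The upper bound, however, breaks at its very first step: a monotone polygon does not in general admit a triangulation whose dual tree is a path. Your own lower-bound example is a counterexample. In a comb with $k$ narrow, deep teeth, the apex of each tooth sees only its two neighbours, so the triangle at each apex is forced in \emph{every} triangulation; each such triangle has two polygon edges, hence is an ear and a leaf of the dual tree, which therefore has at least $k$ leaves, while a path has only two. The sweep triangulation of a monotone polygon produces a tree, not a path, so the decomposition into consecutive blocks of three triangles along a dual path never gets off the ground for precisely the polygons that make the bound tight.

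Even setting that aside, the step you flag as ``where the work lies'' is a genuine second obstruction rather than a technicality: when the angle subtended at the common vertex $q$ of a three-triangle block exceeds $\pi$, the kernel of the union degenerates to the single point $\{q\}$, no open segment covers the block, and re-cutting the blocks to avoid this costs guards you have not budgeted for. The paper sidesteps both problems by a direct induction on $n$: it orders the vertices left to right, places a vertical line between the third and fourth vertices, and uses a diagonal $\overline{4u}$ (or, when that diagonal does not exist, $\overline{uv}$ for a suitable $v$) to cut off a pentagon coverable by one open mobile guard, leaving a monotone polygon with $n-3$ (or, in the harder subcases, $n-4$ or $n-6$) vertices to which the inductive hypothesis applies; the case analysis is over which of these diagonals exists and how long the remaining piece is. If you want to salvage a triangulation-based argument, you would need to group triangles so that each group accounts for three \emph{polygon vertices} rather than three dual-path nodes, and prove each group openly guardable --- which is in substance what the paper's pentagon-cutting induction does.
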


The necessity of $\lfloor \frac{n}{3} \rfloor$ open mobile guards to cover a monotone polygon is given by the example in Figure \ref{pic:Monotone1}. There is no open mobile guard that can see two of the points ``hidden'' inside the spikes. The number of guards follows directly from this example as there is one of these points for every three vertices of $P$. Observe that this example also holds as a lower bound for open edge guards.

\begin{figure}[!htb]
\centering
\includegraphics[scale=0.7]{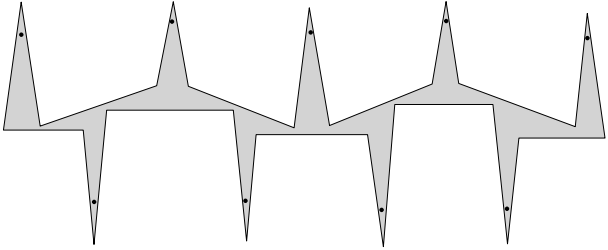}
\caption{No two black points are seen by the same open mobile guard.} \label{pic:Monotone1}
\end{figure}

The sufficiency proof is attained by induction on $n$. This proof is immediate for $n=3,4,5$ since one open mobile guard suffices to fully cover the whole monotone polygon. If $n=6$, then two open mobile guards can be necessary. For the inductive step, suppose that $\lfloor \frac{n}{3} \rfloor$ open mobile guards are sufficient to any cover polygon of $n'<n$ vertices, $n>6$. Let $P$ be a polygon of $n$ vertices that is monotone with respect to a horizontal line. If the vertices of $P$ are sorted from left to right, let $r$ be a vertical line between vertices $3$ and $4$. Supposing vertex $4$ is on the bottom chain, let $u$ be the first vertex on the right of $r$ on the top chain of $P$. There are two possible cases depending on whether the diagonal $\overline{4u}$ exists (see Figure \ref{pic:Monotone2}). If the diagonal $\overline{4u}$ exists on $P$ then the polygon is divided into two monotone polygons: $P'$ of five vertices that can be covered using one open mobile guard and $P''$ of $n-3$ vertices (see 
Figure \ref{pic:Monotone2}(a)). By the inductive hypothesis, $P''$ can be covered by $\lfloor \frac{n-3}{3} \rfloor = \lfloor \frac{n}{3} \rfloor - 1$ open mobile guards. Consequently, $\lfloor \frac{n}{3} \rfloor$ open mobile guards suffice to cover the whole polygon $P$.

\begin{figure}[!htb]
\centering
\includegraphics[scale=0.8]{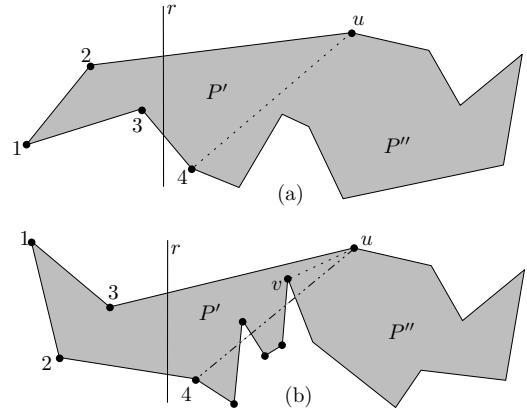}
\caption{(a) The diagonal $\overline{4u}$ exists on polygon $P$ and divides it into polygons $P'$ and $P''$. (b) The diagonal $\overline{4u}$ does not exist on $P$ but $\overline{uv}$ does and it divides $P$ into two monotones polygons.} \label{pic:Monotone2}
\end{figure}

If the diagonal $\overline{4u}$ does not exist then let $v$ be the last vertex on the bottom chain on the left of $\overline{4u}$, which is visible to $u$ (as shown in Figure \ref{pic:Monotone2}(b)). Consequently, $\overline{uv}$ is a diagonal that exists on $P$ and divides the polygon into two monotone ones: $P'$ and $P''$. There are three cases to be considered in this situation: $P''$ has at most $n-6$ edges, $P''$ has $n-4$ edges or $P''$ has $n-5$ edges. If $P''$ has at most $n-6$ edges, then $P''$ is covered by $\lfloor \frac{n-6}{3} \rfloor = \lfloor \frac{n}{3} \rfloor - 2$ open mobile guards according to the inductive hypothesis. And to conclude, two extra open mobile guards are needed to cover $P'$: one is placed on $\overline{3u}$ and the other on the edge connecting vertices $2$ and $4$ (see Figure \ref{pic:Monotone2}(b)). If $P''$ has $n-4$ edges then $v$ is the fifth vertex of $P$ (see Figure \ref{pic:Monotone3}(a)). Now divide $P$ using the diagonal $\overline{3v}$, which creates the pentagon 
$P^*$ that can be fully covered by one open mobile guard. The polygon yet to be covered has $n-3$ edges since it is formed by the union of $P''$ and the triangle formed by the vertices $3$, $u$ and $v$. According to the inductive hypothesis, this polygon can be covered by $\lfloor \frac{n-4}{3} \rfloor \leq \lfloor \frac{n}{3} \rfloor - 1$ open mobile guards. However, vertex $v$ may not see any other vertex of the top chain on the left of $u$. If edge $\overline{1u}$ exists then it fully covers $P'$. If it does not, then $P$ has to be split using diagonal $\overline{uv}$ as shown in Figure \ref{pic:Monotone3}(b). In this case, $P'$ is covered by placing a guard on diagonal $\overline{3u}$ if vertex $2$ is reflex or on edge $\overline{2u}$ otherwise. Note that this is the only case of this proof where an open mobile guard is needed. To conclude this case, polygon $P''$ has $n-4$ edges and therefore can be guarded by $\lfloor \frac{n}{3} \rfloor - 1$ open mobile guards.

\begin{figure}[!htb]
\centering
\includegraphics[scale=0.8]{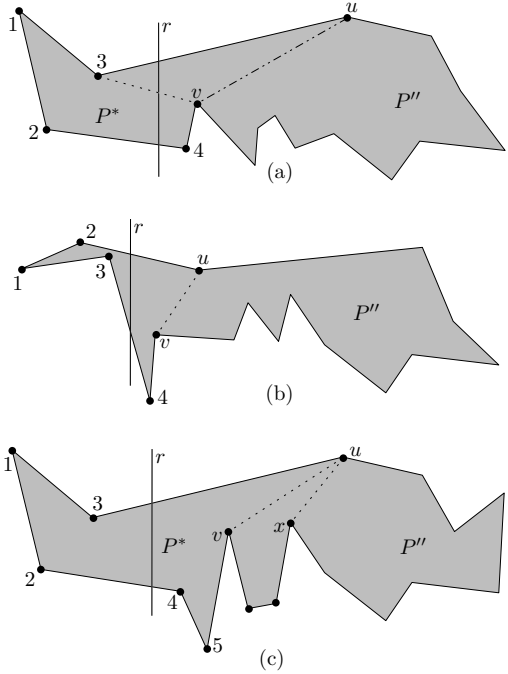}
\caption{(a) The diagonal $\overline{uv}$ divides $P$ into two polygons and $v$ is the fifth vertex. (b) Vertex $v$ does not see any other vertex of the top chain on the left of $u$ and $\overline{1u}$ does not exist. (c) The diagonal $\overline{uv}$ divides $P$ into two polygons and $v$ is the sixth vertex.} \label{pic:Monotone3}
\end{figure}

Finally, if $P''$ has $n-5$ edges then $v$ is the sixth vertex of $P$. In this case, consider the diagonal $\overline{ux}$ where $x$ is the first vertex of the bottom chain after $v$ that is visible to $u$ (see Figure \ref{pic:Monotone3}(c)). Diagonal $\overline{ux}$ divides $P$ into two monotone polygons, one of them is $P^*$ which can be fully covered using two open mobile guards placed on edges $\overline{3u}$ and $\overline{24}$. The piece of the polygon $P$ yet to cover has $n-6$ vertices, and consequently can be covered by $\lfloor \frac{n-6}{3} \rfloor = \lfloor \frac{n}{3} \rfloor - 2$ open mobile guards. This concludes the proof that $\lfloor \frac{n}{3} \rfloor$ open mobile guards suffice to fully cover any monotone polygon.

Note that in the previous examples vertices $2$ and $3$ lie in different chains. If both vertices are in the same chain, the proof still holds but it is possible that the open mobile guard placed on edge $\overline{24}$ that is covering $P'$ or $P^*$ needs to be swapped for the correct edge.

As previously mentioned, there is only one case of this proof where the diagonals of the polygon are needed and so we believe this result holds for open edge guards. Recall that the example in Figure \ref{pic:Monotone1} already proves the lower bound.

\begin{conjecture}
Any monotone polygon with $n$ vertices can be covered by $\lfloor \frac{n}{3} \rfloor$ open edge guards.
\end{conjecture}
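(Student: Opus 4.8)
The plan is to re-run the inductive construction used in the proof of the preceding theorem on open mobile guards, this time keeping track of whether each guard sits on an edge or on a genuine diagonal. Every sub-case of that proof except one is already phrased entirely with open edges: the guard on $\overline{3u}$ is used there precisely because $\overline{3u}$ is an \emph{edge}; the guards on $\overline{24}$, on $\overline{2u}$, and on the middle edge of a pentagon are edge guards; and the recursive pieces $P''$ (of $n-3$, $n-4$, $n-5$, or at most $n-6$ edges) are handled by induction. So those parts carry over verbatim, and the whole problem reduces to the single sub-case of Figure~\ref{pic:Monotone3}(b) — where $v$ is the fifth vertex, $v$ sees no top-chain vertex to the left of $u$, the edge $\overline{1u}$ is absent, and vertex $2$ is reflex — in which the original proof falls back on the diagonal $\overline{3u}$.

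To dispose of that sub-case I would first try to avoid it altogether by not fixing the separating line $r$ ``between vertices $3$ and $4$'' but letting it slide: if the prescribed choice produces the bad configuration, shift $r$ one vertex to the left or right (or send the small piece to the other side of the cut), so that the left piece $P'$ becomes a fan covered by one of its own boundary open edges while the remainder stays monotone with at most $n-3$ vertices. Where even that fails, I would strengthen the induction hypothesis — say, requiring the cover of a monotone polygon of $m$ vertices by $\lfloor m/3\rfloor$ open edge guards to contain a guard on an edge incident to an extreme vertex (equivalently, a guard covering a neighbourhood of such a vertex) — so that when we recurse on $P''$ we already have a guard sitting on the terminal edge of $P''$ at the cut; that guard would then absorb the sliver near the reflex vertex $2$ and remove the need for a diagonal. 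Pinning down the exact form of this strengthening so that it both survives all the inductive sub-cases and holds in the base cases is itself part of the work.

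Concretely the steps are: (i) formulate and check the strengthened statement for $n = 3,4,5,6$ — the only non-routine point being the quadrilateral and the pentagon with a reflex vertex, where one must verify that a covering edge can be chosen incident to (or covering a neighbourhood of) an extreme vertex; (ii) revisit each inductive sub-case and confirm that the strengthened property of $P''$ is inherited by $P$, which is immediate whenever the leftmost vertex of $P$ lies in the small left piece $P'$ and $P'$ is covered by one of its boundary edges; (iii) settle the bad sub-case by the sliding-$r$ argument, backed up if necessary by the strengthened hypothesis on $P''$; and (iv) conclude, exactly as in the mobile case, that $\lfloor (n-3)/3\rfloor + 1 = \lfloor n/3\rfloor$ open edge guards suffice, the matching lower bound being already given by Figure~\ref{pic:Monotone1}. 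This would establish $\g_{OE}(P)\le\lfloor n/3\rfloor$ for every monotone $n$-gon $P$.

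The hard part is step (iii): in the singled-out configuration the polygon is geometrically tight, and it is exactly there that an open edge — which, unlike a closed edge or a diagonal, cannot ``use'' its endpoints — may genuinely fail to cover the left piece on its own. So the argument will probably have to lean on the strengthened hypothesis to supply a second, already-placed guard that reaches into $P'$, and the real risk is that making room for that guard, or for the shifted line $r$, inflates the count beyond $\lfloor n/3\rfloor$ in some corner case; a careful, possibly case-heavy, verification that the accounting survives all local configurations around the reflex vertex $2$ is what the proof will ultimately stand or fall on.
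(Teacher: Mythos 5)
This statement is left as a \emph{conjecture} in the paper: the authors prove the $\lfloor \frac{n}{3} \rfloor$ bound only for open \emph{mobile} guards (Section \ref{sec:Monotone}) and merely remark that, since all but one sub-case of that induction already places guards on edges, they \emph{believe} the edge-guard version also holds. Your proposal reproduces exactly that observation. You correctly isolate the single offending configuration --- the sub-case of Figure \ref{pic:Monotone3}(b), where $v$ is the fifth vertex, $v$ sees no top-chain vertex left of $u$, the edge $\overline{1u}$ is absent, vertex $2$ is reflex, and the paper resorts to the diagonal $\overline{3u}$ --- but you do not resolve it. Step (iii), which you yourself identify as the crux, is only a list of candidate strategies: sliding the cut line $r$, or strengthening the induction hypothesis so that the cover of $P''$ contains a guard near an extreme vertex. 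Neither is carried out. The strengthened hypothesis is never stated precisely, its base cases for $n=3,\ldots,6$ are not actually checked, and its propagation through the sub-cases where $P''$ is obtained by cutting off a pentagon or a hexagon is not verified; you explicitly concede that the accounting may break in some corner case. That is a genuine gap, and it sits exactly at the point where the paper's authors also stopped and downgraded the claim from a theorem to a conjecture.

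To be clear about what would be needed: the danger in the isolated sub-case is intrinsic to open edges, because the guard must cover the sliver of $P'$ near the reflex vertex $2$ without using the endpoints of any edge, and no single boundary edge of $P'$ is guaranteed to do so. Any successful repair must either (a) produce an explicit alternative decomposition of $P$ in that configuration whose pieces are all coverable by boundary edges within the same count, or (b) exhibit a counterexample showing the conjecture is false for open edge guards. Your plan gestures at (a) but supplies neither the decomposition nor the verification, so as written it establishes nothing beyond what the paper already claims informally.
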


\subsection{Orthogonal polygons}\label{sec:MobOrthogonal}

In Section \ref{sec:Orthogonal}, Figure \ref{pic:OpenEdgeOrtho}(b) shows an orthogonal polygon that needs $\lfloor \frac{n}{4} \rfloor$ open edge guards in order to be fully covered. This example does not hold for open mobile guards as a guard placed on some diagonals of the polygon is able to see two of the marked points. However, the orthogonal polygon shown in Figure \ref{pic:MobOrthogonal} can be used to prove the lower bound for this type of guards: at least $\lfloor \frac{n+1}{5} \rfloor$ open mobile guards are needed to cover such polygon. In this example, each set of ten vertices is only covered if there are at least two open mobile guards.

\begin{figure}[!htb]
\centering
\includegraphics[scale=0.8]{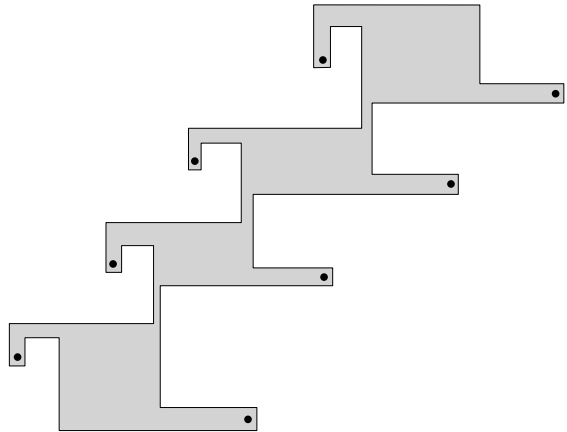}
\caption{Orthogonal polygon that needs at least one open mobile guard to cover each of the marked points.} \label{pic:MobOrthogonal}
\end{figure}

Still, this is a short example and the number of vertices is a multiple of ten. Nevertheless, this polygon can be generalised to $n$ in order to achieve the following result.

\begin{proposition}
There are orthogonal polygons with $n$ vertices that need at least $\lfloor \frac{n+1}{5} \rfloor$ open mobile guards to be fully covered, that is, $\g_{OM}(n) \geq \lfloor \frac{n+1}{5} \rfloor$.
\end{proposition}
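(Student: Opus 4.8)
The plan is to exhibit, for every even $n\ge 10$, an explicit simple orthogonal polygon $P_n$ that generalises the example of Figure~\ref{pic:MobOrthogonal}, together with a set $H_n$ of points interior to $P_n$ that are pairwise \emph{mobile-independent} --- no single open mobile guard sees two of them --- with $|H_n|=\lfloor\frac{n+1}{5}\rfloor$. Since every covering family of open mobile guards must then contain a distinct guard for each point of $H_n$, this gives $\g_{OM}(P_n)\ge\lfloor\frac{n+1}{5}\rfloor$, hence $\g_{OM}(n)\ge\lfloor\frac{n+1}{5}\rfloor$. (For $n\le 8$ the bound equals $1$ and is trivial.)

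First I would isolate the $10$-vertex \emph{gadget} underlying Figure~\ref{pic:MobOrthogonal}: a block of ten consecutive edges carrying two narrow reflex pockets, one marked point deep inside each, positioned so that the two points admit no common open mobile guard. Concatenating $k$ such gadgets along a common axis yields a simple orthogonal polygon on $n=10k$ vertices with $2k$ marked points, which settles the case $n\equiv 0\pmod{10}$ since $\lfloor\frac{10k+1}{5}\rfloor=2k$. For an arbitrary even $n$, write $n=10k+2j$ with $0\le j\le 4$ and append to the chain of $k$ gadgets a short subchain of exactly $2j$ vertices: for $j\le 1$, a shallow padding strip attached where it can neither hide a point nor enlarge the coverage of any mobile guard; for $j\in\{2,3,4\}$, a single additional pocket on $2j$ vertices carrying one further marked point. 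A one-line case check then gives $|H_n|=2k$ for $j\in\{0,1\}$ and $|H_n|=2k+1$ for $j\in\{2,3,4\}$, which in every case equals $\lfloor\frac{n+1}{5}\rfloor$. One also has to check that concatenation and the appendage keep $P_n$ simple and orthogonal --- routine, but needed for the statement to have content.

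The heart of the argument, and the step I expect to be the main obstacle, is proving mobile-independence. Unlike ordinary independence for point guards, a single mobile guard occupies a whole open chord (edge or internal diagonal), so for distinct marked points $h_i\ne h_j$ one must rule out \emph{every} chord $\overline{xy}$ of $P_n$ having one interior point that sees $h_i$ and another that sees $h_j$. The standard device is to place each $h_i$ at the bottom of a narrow pocket, so that the visibility region $V(h_i)\subseteq P_n$ is a thin wedge issuing from the mouth of that pocket; it then suffices to show that (i) the wedges $V(h_i)$ are pairwise disjoint and (ii) no segment contained in $P_n$ meets two of them, condition (ii) being the genuinely ``mobile'' obstruction that the comb-like horizontal and vertical offsets between consecutive pockets in Figure~\ref{pic:MobOrthogonal} are designed to defeat. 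By the translational symmetry of the construction this reduces to a bounded verification involving only two or three neighbouring gadgets (plus the appendage), after which periodicity finishes the general case. I would carry out that finite check explicitly, choosing the offsets of the pockets so that any segment aimed from one pocket's wedge towards another is blocked by an intervening reflex edge; the remaining items --- exact vertex counts, orthogonality, simplicity --- are then straightforward bookkeeping.
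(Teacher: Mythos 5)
Your proposal follows essentially the same route as the paper: the paper simply exhibits the comb-like orthogonal polygon of Figure~\ref{pic:MobOrthogonal}, observes that each block of ten vertices forces two open mobile guards because the marked points are pairwise invisible to any single open chord, and asserts that the construction generalises to arbitrary $n$. Your write-up is in fact more careful than the paper's own one-paragraph argument --- in particular in spelling out the remainder cases $n=10k+2j$ and in identifying the mobile-independence of the hidden points (no diagonal seeing two of them) as the step that actually needs verification.
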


Obviously, we also believe this lower bound coincides with the upper bound.

\begin{conjecture}
Any orthogonal polygon with $n$ vertices can be covered by $\lfloor \frac{n+1}{5} \rfloor$ open mobile guards.
\end{conjecture}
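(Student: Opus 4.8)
The plan is to prove the upper bound $\g_{OM}(P)\le\lfloor\frac{n+1}{5}\rfloor$ by strong induction on $n$, equivalently on the number $r=\frac{n-4}{2}$ of reflex vertices of the orthogonal polygon $P$. Since an open mobile guard is at least as powerful as an open edge guard, the result of Section~\ref{sec:Orthogonal} already gives $\g_{OM}(P)\le\lfloor\frac{n}{4}\rfloor$, so the whole content of the conjecture is that the diagonals genuinely help; the lower bound of Figure~\ref{pic:MobOrthogonal} shows this is unavoidable. The single inductive move I would use is to cut off from $P$, using axis-parallel chords, a subpolygon $P'$ on exactly eight vertices (hence with two reflex vertices) that is covered by one open mobile guard, so that the residual $P''$ is again an orthogonal polygon, now on $n-6$ vertices. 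This is the right accounting, because for every $n$ one has $1+\lfloor\frac{(n-6)+1}{5}\rfloor=\lfloor\frac{n}{5}\rfloor\le\lfloor\frac{n+1}{5}\rfloor$, so the induction closes for \emph{every} residue of $n$ modulo $5$, and in particular in the delicate case $n\equiv 3\pmod{5}$ where peeling off only a six-vertex $L$-flap would not reduce the count enough. Note that $n$ is always even for an orthogonal polygon and that an axis-parallel chord through a reflex vertex changes the vertex count by an even amount, so an eight-vertex flap is the smallest one that is both guaranteed to exist and able to absorb a full guard's ``share'' of five vertices.

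For the base cases I would verify directly that every orthogonal polygon with $n\le 12$ vertices admits a cover by $\lfloor\frac{n+1}{5}\rfloor$ open mobile guards: a rectangle or an $L$-hexagon is covered by one guard on a diagonal lying in its kernel; every orthogonal octagon (which, having only two reflex vertices, is a $T$, a $U$, a ``Z'', or a short staircase) is covered by one guard on its main diagonal; and orthogonal decagons and $12$-gons are covered by two. It is convenient here, and again in the inductive step, to isolate a short visibility lemma: a \emph{histogram} --- an orthogonal polygon possessing an edge $e$ whose orthogonal shadow fills the whole polygon --- is covered by the single open edge guard on $e$; and a \emph{staircase polygon} --- one whose reflex vertices form a single monotone chain --- is covered by the single open mobile guard on the diagonal joining the two extreme convex vertices of that chain. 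Both families may contain arbitrarily many vertices while still needing only one guard, which is exactly what makes the bound $n/5$ rather than $n/4$ plausible.

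For the inductive step, with $n>12$, I would produce the eight-vertex flap by anchoring the cut at an extreme edge, in the spirit of the ``vertices $1,2,3,4$'' device used in the monotone-polygon proof above. Let $e$ be a northmost horizontal edge of $P$; walking along $\partial P$ from the two endpoints of $e$ one reaches the topmost reflex vertices, and extending their appropriate horizontal edges into $P$ yields a chord below which lies a histogram-like cap. If this cap has more than eight vertices, slide a chord downward until it has exactly eight; if it has fewer, absorb the rectangle(s) immediately below it, again stopping at eight. The resulting $P'$ is then a histogram or a short staircase, hence covered by one open mobile guard by the lemma, and $P''$ is a single orthogonal polygon on $n-6$ vertices, so by the inductive hypothesis $P$ is covered by at most $1+\lfloor\frac{(n-6)+1}{5}\rfloor\le\lfloor\frac{n+1}{5}\rfloor$ open mobile guards. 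A symmetric argument anchored at a westmost vertical edge handles the configurations in which the northmost-edge construction degenerates.

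The step I expect to be the genuine obstacle is exactly the existence claim inside the inductive step: that for \emph{every} orthogonal polygon with $n>12$ one can carve off, by axis-parallel cuts, a flap on \emph{precisely} eight vertices that is weakly visible from one chord, while leaving a \emph{connected} orthogonal residual with $n-6$ vertices. Forcing the count to land on eight (rather than six, ten, or more) generally requires choosing among several candidate chords and, in awkward local configurations, making a second cut, all while preserving both the one-guard coverability of $P'$ and the connectedness and orthogonality of $P''$; this is the same multi-case bookkeeping that already makes the monotone proof long, only heavier because the cuts are restricted to two directions. A secondary obstacle lies in the base cases: checking that no orthogonal $12$-gon needs three open mobile guards is a small but real case analysis, and it is precisely where a counterexample to the conjecture --- if one exists --- would first show up.
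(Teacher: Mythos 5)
First, note that the paper offers no proof of this statement: it is stated as a conjecture, supported only by the lower-bound construction of Figure~\ref{pic:MobOrthogonal}, so there is no argument of ours to compare yours against.

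Your proposal has a gap that is worse than the ``genuine obstacle'' you flag at the end: the existence claim at the heart of your inductive step is not merely hard to establish, it is \emph{false}, and the paper's own lower bound shows this. Suppose every orthogonal polygon with $n>12$ vertices admitted an eight-vertex flap coverable by one open mobile guard whose removal leaves a connected orthogonal polygon on $n-6$ vertices. Writing $h(n)$ for the maximum of $\g_{OM}(P)$ over orthogonal $n$-gons, this would give the recursion $h(n)\le 1+h(n-6)$ for $n>12$, and together with your base cases $h(n)\le 2$ for $n\le 12$ it would yield $h(n)\le 2+\lceil\frac{n-12}{6}\rceil\approx\frac{n}{6}$. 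But the comb-like polygon of Figure~\ref{pic:MobOrthogonal} requires two guards for every group of ten vertices, i.e.\ about $\frac{n}{5}$ guards; already for $n=70$ it needs $\lfloor\frac{71}{5}\rfloor=14$ guards while your recursion would cap $h(70)$ at $12$. Hence on that very polygon the inductive move must fail: there is no way to keep paying one guard for every six vertices removed. (The fact that each individual step satisfies $1+\lfloor\frac{n-5}{5}\rfloor\le\lfloor\frac{n+1}{5}\rfloor$ hides this, because the slack you discard at each level is exactly what accumulates into the impossible $n/6$ bound.) Any correct induction for $\lfloor\frac{n+1}{5}\rfloor$ must therefore use flaps of variable cost --- in particular it must sometimes spend $k\ge 2$ guards while removing only about $5k$ vertices, mirroring the ten-vertex, two-guard cells of the lower-bound gadget --- and the real difficulty of the conjecture is precisely to show that one never does worse than that ratio. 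Secondary but real issues remain as well: a single axis-parallel chord can disconnect the residual polygon, the cap below a northmost edge need not be a histogram (it can have lateral dents invisible from any one chord), and your octagon base case is stated imprecisely (a deep U-shape has no covering diagonal; the covering guard is its bottom edge).
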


\subsection{Spiral polygons}\label{sec:MobSpirals}

\begin{table*}[!htb]
 \centering
  \ra{1.4}
   \begin{tabular}{@{}lcc|cc@{}} \toprule
        Class of Polygons & \multicolumn{2}{c}{Open Edge Guards} & \multicolumn{2}{r}{Open Mobile Guards} \\
	\hline	
	Orthogonal && $\g_{OE}(n) = \lfloor \frac{n}{4} \rfloor$ && $\g_{OM}(n) \geq \lfloor \frac{n+1}{5} \rfloor$ \\
	Orthogonal with holes && $\lfloor \frac{n-4}{4} \rfloor \leq \g_{OE}(n) \leq \lfloor \frac{n}{4} \rfloor$ && - \\
	Spirals && $\g_{OE}(n) = \lfloor \frac{n+1}{4} \rfloor$ && $\g_{OM}(n) = \lfloor \frac{n+1}{4} \rfloor$ \\
	Orthogonal Spirals && $\g_{OE}(n) = \lceil \frac{n-2}{6} \rceil$ && $\g_{OM}(n) = \lceil \frac{n-2}{6} \rceil$ \\
	Monotone  && $\g_{OE}(n) \geq \lfloor \frac{n}{3} \rfloor$ && $\g_{OM}(n) = \lfloor \frac{n}{3} \rfloor$ \\
        \bottomrule
        \bottomrule
        Planar Triangulations && $\g_{OE}(t) = \lfloor \frac{2t+1}{3} \rfloor$ && - \\
        \bottomrule
    \end{tabular}
    \caption{Summary of the results.} \label{table}
\end{table*}

Even though open edge guards are more restrictive than open mobile guards, the spiral polygon depicted in Figure \ref{pic:Spiral1}(a) in Section \ref{sec:Spirals} also works as a lower bound concerning open mobile guards. Furthermore, the proof of the upper bound presented in that section also holds for open mobile guards. Consequently, the following theorem follows directly from the results proven above. 

\begin{theorem}
Any spiral polygon with $n$ vertices can be covered by $\lfloor \frac{n+1}{4} \rfloor$ open mobile guards, and in some cases this number is necessary.
\end{theorem}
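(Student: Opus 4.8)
The plan is to prove the two bounds separately, each by reduction to the open edge guard results already established for spirals in Section~\ref{sec:Spirals}.

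\emph{Upper bound.} First I would observe that every open edge guard is the special case of an open mobile guard that happens to be confined to a single edge; hence any set of open edge guards covering a polygon $P$ is also a set of open mobile guards covering $P$, so $\g_{OM}(P)\le\g_{OE}(P)$. Combining this with the open edge spiral bound $\g_{OE}(P)\le\lfloor\frac{n+1}{4}\rfloor$ from Section~\ref{sec:Spirals} gives $\g_{OM}(P)\le\lfloor\frac{n+1}{4}\rfloor$ immediately, with no new work. (Alternatively one can rerun the induction of Section~\ref{sec:Spirals} verbatim: its base cases and all four inductive cases place guards only on edges of the convex chain, never using a diagonal, so the argument applies unchanged to open mobile guards; but the a fortiori argument is cleaner.)

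\emph{Lower bound.} I would reuse the spiral of Figure~\ref{pic:Spiral1}(a). It was already shown there that its marked points are pairwise visibly independent for open edge guards, forcing $\lceil\frac{n-2}{4}\rceil=\lfloor\frac{n+1}{4}\rfloor$ of them. The extra step for open mobile guards is to check that these points stay pairwise independent when a guard may also patrol an open diagonal. The key fact is that this particular spiral admits a unique triangulation, so every chord of the polygon is one of its triangulation diagonals; each marked point lies in the interior of its own block of four consecutive triangles, and the reflex chain blocks the line of sight from that point to the interior of any segment lying outside its block. Since distinct blocks have disjoint interiors, no open mobile guard — whether an open edge or an open diagonal — can see two marked points, and therefore $\g_{OM}(n)\ge\lfloor\frac{n+1}{4}\rfloor$. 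Together with the upper bound this yields $\g_{OM}(n)=\lfloor\frac{n+1}{4}\rfloor$.

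The main obstacle is exactly this last verification: one must be sure that a cleverly placed open diagonal cannot ``cheat'' by covering two hidden points at once, since diagonals are genuinely new guarding positions not available in the open edge setting. I expect to settle it by a direct inspection of the pocket structure of the spiral in Figure~\ref{pic:Spiral1}(a) along the lines sketched above; everything else in the argument is routine once the open edge results of Section~\ref{sec:Spirals} are in hand.
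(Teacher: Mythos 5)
Your proposal is correct and follows essentially the same route as the paper: the upper bound is inherited a fortiori from the open edge result of Section~\ref{sec:Spirals}, and the lower bound reuses the spiral of Figure~\ref{pic:Spiral1}(a). If anything, you are more careful than the paper, which simply asserts that the example ``also works'' for open mobile guards, whereas you correctly flag that one must check no open diagonal can see two of the hidden points and sketch why the reflex chain prevents this.
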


\section{Conclusions and further research} \label{sec:conclusions}

This paper introduced several results on the number of both open edge and open mobile guards concerning the coverage of some classes of simple polygons. These results are summarised in Table \ref{table}. Although omitted from the table, the Fortress Problem -- guarding the exterior of a polygon -- was also studied and some results were obtained, mainly for orthogonal polygons. We believe that finding the minimum number of open edge guards to cover a polygon is NP-hard, as this is the case for closed edge guards \cite{LL86}. Notwithstanding, Section \ref{sec:Spirals} introduced an algorithm to place the minimum number of open edge guards to cover a spiral polygon. Further research on this topic would focus on finding an algorithm to place the guards in the case of the fortress problem. We also believe that the bound of $\lfloor \frac{n}{3} \rfloor$ open mobile guards to cover monotone polygons is a tight bound for open edge guards as well. On open mobile guards, tighten the bound of $\lfloor \frac{n+1}{5} \rfloor$ for orthogonal polygons remains a future goal.

Section \ref{sec:triangulations} introduced the problem of covering planar triangulations using open edge guards. The main theorem of the section can also be found in Table \ref{table} and states that for any triangulation with $t$ triangles, $\g_{OE}(t) = \lfloor \frac{2t+1}{3} \rfloor$. As further research, it would be interesting to find an efficient algorithm to calculate the minimum number of open edge guards that cover a given triangulation $G$. Micali et al. showed that it is possible to build a matching of maximum size in a graph of $n$ nodes and $m$ edges in $\mathcal{O}(\sqrt{n} m)$ time \cite{MV80}. Therefore, there is a polynomial algorithm to calculate $\g_{OE}(G)$, but it would be compelling to improve this result.

To conclude, we presented several results for this variation of the Art Gallery problem, but we also opened a series of unsolved problems and interesting conjectures to tackle in the future. \vspace*{0.5cm}

\subsection*{Acknowledgements}
The first, fourth and fifth authors research was supported by {\it FEDER} funds through {\it COMPETE}--Operational Programme Factors of Competitiveness, CIDMA and FCT within project PEst-C/MAT/UI4106/2011 with COMPETE number FCOMP-01-0124-FEDER-022690. The third author was supported by ESF EUROCORES programme EuroGIGA - ComPoSe IP04 - MICINN Project EUI-EURC-2011-4306. The fourth and fifth authors were also supported by FCT grants SFRH/BPD/66431/2009 and  SFRH/BPD/66572/2009, respectively.

\bibliographystyle{abbrv}
\bibliography{OpenEdgeGuards}

\end{document}